\newtheorem{theorem}{{\bf Theorem}}
\newtheorem{corollary}{{\bf Corollary}}
\newtheorem{proposition}{\noindent {\bf Proposition}}
\newtheorem{lemma}{\noindent {\bf Lemma}}
\newtheorem{remark}{\noindent {\bf Remark}}
\newtheorem{mydef}{\noindent {\bf Definition}}
\newcommand\myeq{\mathrel{\overset{\makebox[0pt]{\mbox{\normalfont\tiny\sffamily def}}}{=}}}
\newcommand{\diag}{\mathop{\rm diag}}
\let\l@ENGLISH\l@english
\begin{document}%
\title{Analytical Derivation of the Inverse Moments of  One-sided Correlated Gram Matrices with Applications}
\author{Khalil Elkhalil,~\IEEEmembership{Student Member,~IEEE,} Abla Kammoun,~\IEEEmembership{Member,~IEEE}, Tareq~Y.~Al-Naffouri,~\IEEEmembership{Member,~IEEE}, and Mohamed-Slim Alouini,~\IEEEmembership{Fellow,~IEEE}

\thanks{K. Elkhalil, A. Kammoun, T. Y. Al-Naffouri and M.-S. Alouini are with the Electrical Engineering Program, King Abdullah University of Science and Technology, Thuwal, Saudi Arabia; e-mails: \{khalil.elkhalil, abla.kammoun, tareq.alnaffouri, slim.alouini\}@kaust.edu.sa. Tareq Y. Al-Naffouri is also associated with the Department of Electrical Engineering, King Fahd University of Petroleum
and Minerals, Dhahran 31261, Kingdom of Saudi Arabia.
}

}

\maketitle

\vspace{-15mm}

\begin{abstract}

This paper addresses the development of analytical tools for the computation of the moments of random Gram matrices with one side correlation. Such a question is mainly driven by applications in signal processing and wireless communications wherein such matrices naturally arise. In particular, we derive closed-form expressions for the inverse moments and show that the obtained results  can help approximate  several performance metrics such as the average estimation error corresponding to the Best Linear Unbiased Estimator (BLUE) and the Linear Minimum Mean Square Error LMMSE or also other loss functions  used to measure the accuracy of covariance matrix estimates.

\end{abstract}

\begin{IEEEkeywords}
Gram matrices, One side correaltion, Inverse moments, Linear estimation, BLUE, LMMSE, Sample covariance matrix.
\end{IEEEkeywords}

\section{Introduction and basic assumptions}

The study of the behavior of random matrices is a key question that appears in many disciplines such as wireless communication, signal processing and  economics, to name a few.  The main motivation behind this question comes from the fundamental role that play random matrices in modeling unknown and unpredictable physical quantities. In many situations, meaningful metrics expressed as scalar functionals of these random matrices naturally arise. The understanding of their behaviour is, however, a difficult task which might be out of reach especially when involved random models are considered.  One approach to tackle this problem is represented by the moment method. It basically resorts to the Taylor expansion of differentiable functionals in order to turn this difficult question into that of computing the moments of random matrices, where the moment of a $m\times m$ random matrix  ${\bf S}$  refers  to the quantities $\frac{1}{m}\textsf{Tr}\mathbb{E}({\bf S}^r)$ for $r\in\mathbb{Z}$.  
Along this line,  a large amount of works, mainly driven by the recent advances in spectral analysis of large dimensional random matrices, have considered the computation of the asymptotic moments, the term "asymptotic" referring to the  regime in which  the dimensions of the underlying random matrix grow simultaneously large. Among the existing works  in this direction,  we can cite for instance, the work in \cite{couillet-08,kammoun_yao} where the computation of the asymptotic moments is used to infer the transmit power of multiple signal sources, that of \cite{Ryan} dealing with the asymptotic moments of random Vandermonde matrices and finally that of  \cite{jacob}, where the authors studied the asymptotic behavior of the moments in order to allow for the design of a low complexity receiver with a comparable performance to the linear minimum mean square error (LMMSE) detector. 
While working under the asymptotic regime has  enabled the derivation of closed-form expressions for all kind of moments, it presents the drawback of being less accurate for finite dimensions. Alternatively, one might consider the exact approach, which relies on the  already available expression of the marginal eigenvalues' density of Gram random matrices. Interestingly, this approach, despite its seemingly simplicity, has mainly been limited to  computing the moments of Wishart random matrices \cite{maiwald,letac-04}. To the best of the authors' knowledge,  the case of random Gram matrices has never been thoroughly investigated. This lies behind the principal motivation of the present work.  

In this paper, we consider the derivation of the exact moments of random matrices of the form  $\mathbf{S}=\mathbf{H}^*\mathbf{\Lambda}\mathbf{H}$,  where $\mathbf{H}$ is a $n \times m$ ($n > m$) matrix with independent  and identically distributed (i.i.d.) zero-mean unit variance complex Gaussian random entries, and $\mathbf{\Lambda}$ is a fixed $n \times n$ positive definite matrix. It is worth pointing out that matrix ${\bf S}$ cannot be classified as a Wishart random matrix. However, its positive moments $\frac{1}{m}\textsf{Tr} \mathbb{E} {\bf S}^r, r\geq 0$ coincide with those of the Wishart random matrix $\mathbf{\Lambda}^{\frac{1}{2}}\mathbf{H}\mathbf{H}^*\mathbf{\Lambda}^{\frac{1}{2}}$, and  can be thus computed by using existing results on the moments of Wishart matrices. As far as inverse moments are considered $(r<0)$, the same artifice is of no help, mainly because the random matrix $\mathbf{\Lambda}^{\frac{1}{2}}\mathbf{H}\mathbf{H}^*\mathbf{\Lambda}^{\frac{1}{2}}$ becomes singular and thus inverse moments cannot be defined. Besides, from a theoretical standpoint, computing the inverse moments using the  Mellin transform derived in \cite{giusi} is not an easy task. The crude use of the expression provided in \cite{giusi} brings about singularity issues, as will be demonstrated in the course of the paper. Answering to the so-far unsolved question of computing the inverse moments of Gram random matrices constitutes the main contribution of this work.  Additionally, based on the obtained closed-form expression of the exact moments, we revisit some problems in linear estimation. In particular, we provide closed-form expressions of the mean square error for the   best linear unbiased estimator (BLUE) and the linear minimum mean square error estimator (LMMSE) in both high and low SNR regimes. We also derive as a further application the optimal tuning of the windowing factor used in covariance matrix estimation. 
 The remainder of this paper is organized as follows. In Section II, we present the main result of this paper  giving closed form expressions of the inverse moments. In section III, we provide some potential applications and discuss some performance metrics. We then conclude the paper in section IV. Mathematical details are provided in the appendices.\\
\emph{Notations:} 
Throughout this paper, we use the following notations :  $\mathbb{E}\left(\mathbf{X}\right)$ stands for the expectation of a random quantity $\mathbf{X}$ and $\mathbb{E}_{\mathbf{X}}\left(f\right)$ stands for the expected value of $f$ with respect to $\mathbf{X}$.  Matrices are denoted by bold capital letters, rows and columns of the matrices are referred with lower case bold letters ($\mathbf{I}_n$ is the size-$n$ identity matrix). If $\mathbf{A}$ is a given matrix, $\mathbf{A}^t$ and $\mathbf{A}^*$ stand respectively for its transpose and transconjugate. For a square matrix $\mathbf{A}$, we respectively denote by $\textsf{Tr}\left(\mathbf{A}\right)$, $\det \left(\mathbf{A}\right)$ and $\left \|\mathbf{A}\right\|$ its trace, determinant and spectral norm. We refer by $\left[\mathbf{A}\right]_{i,j}$ the $\left(i,j\right)$th entry of $\mathbf{A}$ and by $\textsf{diag}\left(a_1, a_2, \cdots, a_n\right)$ the diagonal matrix with diagonal elements, $a_1, a_2, \cdots, a_n$. 

%

\section{Exact Closed-form expression for the  moments} \label{section2}
Consider a $\left(n \times m\right)$ random matrix $\mathbf{H}$ with i.i.d zero-mean unit variance complex Gaussian random entries with $m<n$. Let $\mathbf{\Lambda}$ be a deterministic $\left(n \times n\right)$ positive definite matrix with distinct eigenvalues $(\theta_1\leq\cdots\leq\theta_n)$ and define the Gram matrix ${\bf S}$ as:
\begin{equation}
\mathbf{S}=\mathbf{H}^*\mathbf{\Lambda}\mathbf{H}.
\label{eq:S}
\end{equation}
In this paper, we consider the computation of the moments $\mu_{\mathbf{\Lambda}}\left(r\right)$ defined as:
\begin{equation} \label{moment}
\mu_{\mathbf{\Lambda}}\left(r\right)\myeq\frac{1}{m}\textsf{Tr}\left(\mathbb{E}_{\mathbf{H}}\{\mathbf{S}^{r}\}\right), \quad r \in \mathbb{Z}.
\end{equation}
 As we will see later, in contrast to the positive moments $(r>0)$ which can be directly obtained from the marginal eigenvalues' density, the derivation of the inverse moments $(r<0)$ is not immediate, requiring a careful analysis of the available existing results.

In the following, we will build on the exact approach in order to derive closed-form expressions for the moments  $\mu_{\mathbf{\Lambda}}\left(r\right)$.  The asymptotic moments will be dealt with subsequently in order to illustrate their inefficiency in evaluating the moments of the eigenvalues of small dimensions Gram matrices.  
\subsection{Closed-form Expressions for the Exact Moments in Fixed Dimensions}

 The exact calculation of moments  is mainly based on existing results on the marginal density of the eigenvalues of $\mathbf{S}$. These results characterize the Mellin transform of the marginal density, the definition of which is given by: 
\begin{mydef}
Denote by $\xi\mapsto f_{\lambda}\left(\xi\right)$ the marginal density distribution of an unordered eigenvalue of $\mathbf{S}$. Then, the Mellin transform of $f_{\lambda}\left(.\right)$ is defined as
\begin{equation} \label{mellin}
\mathcal{M}_{f_{\lambda}}(s) \triangleq\int_{0}^{\infty}\xi^{s-1}f_{\lambda}\left(\xi\right) d\xi.
\end{equation}
\end{mydef} 
With the above definition at hand, we are now in position to recall the following Lemma that provides a closed-form expression for the Mellin Transform of the marginal density of ${\bf S}$:
\begin{lemma}{\cite[Theorem 2]{giusi}}
Let ${\bf S}$ be as in \eqref{eq:S}. Then,
\small
\begin{equation}\label{mellinlemma}
\begin{split}
\mathcal{M}_{f_{\lambda}}(s)&=L\sum
_{j=1}^{m}\sum_{i=1}^{m}\mathcal{D}\left(i,j\right)\Gamma\left(s+j-1\right)\Biggl( \theta_{n-m+i}^{n-m+s+j-2} \\& -\sum_{l=1}^{n-m}\sum_{k=1}^{n-m}\left[\mathbf{\Psi}^{-1}\right]_{k,l}\theta_{l}^{n-m+s+j-2}\theta_{n-m+i}^{k-1}\Biggr)
\end{split}
\end{equation}
\label{lemma:mellin}
\normalsize
with $L=\frac{\det\left(\mathbf{\Psi}\right)}{m\prod_{k <l}^n \left(\theta_l-\theta_k\right)\prod_{l=1}^{m-1}l!}$, $\Gamma(.)$ the Gamma function and $\mathbf{\Psi}$ is the $\left(n-m\right)\times \left(n-m\right)$ Vandermonde matrix,

\[\mathbf{\Psi}=\begin{bmatrix}
1 &\theta_1  &\cdots   &\theta_1^{n-m-1} \\
\vdots  &\vdots   &\ddots   &\vdots  \\
 1&\theta_{n-m}  & \cdots  & \theta_{n-m}^{n-m-1}
\end{bmatrix}\]
where $\mathcal{D}\left(i,j\right)$ is the $(i,j)-$cofactor of the $(m\times m)$ matrix $\mathbf{\mathcal{C}}$ whose $(l,k)-$th entry is given by
\begin{align*}
\tiny
 &(k-1)!\Biggl(\theta_{n-m+l}^{n-m+k-1}
-\sum_{p=1}^{n-m}\sum_{q=1}^{n-m}\left[\mathbf{\Psi}^{-1}\right]_{p,q} \\
& \times \theta_{n-m+l}^{p-1} \theta_{q}^{n-m+k-1}\Biggr).
\end{align*}
\label{lemma:key}
\end{lemma}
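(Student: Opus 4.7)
The plan is to derive the Mellin transform directly from the marginal eigenvalue density of $\mathbf{S}$. First, by unitary invariance of the Gaussian law of $\mathbf{H}$, I would diagonalize $\mathbf{\Lambda} = \textsf{diag}(\theta_1, \ldots, \theta_n)$ without loss of generality. The joint density of the unordered eigenvalues $(\lambda_1,\ldots,\lambda_m)$ of $\mathbf{S}$ is classical when $\mathbf{H}$ is square; in the rectangular case $n > m$ considered here, one must integrate out $n - m$ ``excess'' eigenvalues of $\mathbf{\Lambda}$, and this is precisely what produces the Vandermonde matrix $\mathbf{\Psi}$ built from $\theta_1, \ldots, \theta_{n-m}$ and its inverse in the final expression.

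Second, I would express the joint density in bi-determinantal form
\[
f(\lambda_1, \ldots, \lambda_m) \;=\; C \, \prod_{i<j}(\lambda_j - \lambda_i) \, \det\bigl[\phi_k(\lambda_l)\bigr]_{k,l=1}^{m},
\]
where each $\phi_k(\lambda)$ combines an exponential $e^{-\lambda/\theta_{n-m+k}}$ with appropriate monomial prefactors in $\lambda$ and with $\theta$-dependent coefficients originating from the $\mathbf{\Psi}^{-1}$ reduction. The marginal density $f_\lambda$ of a single unordered eigenvalue then follows by integrating out the remaining $m-1$ variables via Andreief's (de Bruijn's) identity, which collapses a product of two determinants under a symmetric integral into a single determinant. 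The cofactors $\mathcal{D}(i,j)$ of the auxiliary $m\times m$ matrix $\mathbf{\mathcal{C}}$ naturally arise in this step from expanding the resulting determinant along the row corresponding to the variable that is not integrated.

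Third, the Mellin transform is computed term by term. Each summand of $f_\lambda$ carries a factor $\xi^{j-2} e^{-\xi/\theta_{n-m+i}}$ (times $\xi$-independent prefactors), against which the Euler integral $\int_0^\infty \xi^{s-1} \xi^{j-2} e^{-\xi/\theta} \, d\xi = \Gamma(s+j-1)\, \theta^{s+j-1}$ produces the $\Gamma(s+j-1)$ factor in \eqref{mellinlemma}; the remaining $\theta$-dependent coefficients combine with $\theta^{s+j-1}$ to yield the $\theta^{n-m+s+j-2}$ powers. The normalization $L$ is finally recovered either by expressing the overall joint density constant in closed form, or equivalently by enforcing $\mathcal{M}_{f_\lambda}(1) = 1$.

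The main obstacle is the correct treatment of the rectangular case $n > m$: the underlying Harish-Chandra--Itzykson--Zuber machinery applies cleanly to square matrices only, so one has to carefully isolate and integrate out the $n-m$ eigenvalues of $\mathbf{\Lambda}$ that do not ``pair'' with eigenvalues of $\mathbf{S}$, against a Vandermonde weight; this is the algebraic source of $\mathbf{\Psi}^{-1}$ and is what makes the formula noticeably more intricate than the Wishart analogue. A secondary subtlety, not manifest in \eqref{mellinlemma}, is that the apparent singularities at coincident $\theta_i$'s must cancel, since the underlying joint density is jointly analytic in the $\theta_i$'s; verifying this would require an explicit limit argument that I would postpone until after the formal identity is in hand.
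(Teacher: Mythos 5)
The paper itself offers no proof of this lemma: it is imported verbatim as Theorem~2 of the cited reference, so there is nothing internal to compare your argument against. Judged on its own, your outline follows the standard route by which such Mellin-transform formulas are obtained in the literature (unitary invariance to diagonalize $\mathbf{\Lambda}$, a bi-determinantal joint eigenvalue density in which the $n-m$ ``unpaired'' eigenvalues of $\mathbf{\Lambda}$ are integrated out to produce $\mathbf{\Psi}^{-1}$, marginalization via the Andreief identity with a cofactor expansion yielding $\mathcal{D}(i,j)$, and a termwise Euler integral), and it correctly identifies the rectangular case $n>m$ as the source of the extra Vandermonde structure. Two caveats. First, your Euler integral is off by one as written: $\int_0^\infty \xi^{s-1}\,\xi^{j-2}e^{-\xi/\theta}\,d\xi = \Gamma(s+j-2)\,\theta^{s+j-2}$, so to produce the factor $\Gamma(s+j-1)$ appearing in \eqref{mellinlemma} the marginal density must carry $\xi^{j-1}e^{-\xi/\theta_{n-m+i}}$ (times $\xi$-independent coefficients), not $\xi^{j-2}$; this is a bookkeeping slip that would surface and be corrected the moment you wrote the density explicitly. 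Second, the plan defers precisely the step that carries all the difficulty --- the explicit reduction that turns the $n\times m$ integration over the unitary group into the matrix $\mathbf{\mathcal{C}}$ with entries $(k-1)!\bigl(\theta_{n-m+l}^{n-m+k-1}-\sum_{p,q}[\mathbf{\Psi}^{-1}]_{p,q}\,\theta_{n-m+l}^{p-1}\theta_q^{n-m+k-1}\bigr)$ --- so as it stands it is a credible roadmap rather than a proof; to make it complete you would need to exhibit the functions $\phi_k$ explicitly and verify that the cofactor expansion reproduces exactly these entries and the constant $L$.
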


The  exact moments  $\mu_{\mathbf{\Lambda}}\left(r\right)$ for $r\geq 0$ can be obtained as a direct consequence of Lemma \ref{lemma:mellin} by replacing in \eqref{mellinlemma} $s$ by $r+1$, thereby yielding the following corollary:
\begin{corollary}
For $r\geq 0$, the moments $\mu_{\mathbf{\Lambda}}\left(r\right)$ are given by:
$$
\mu_{\mathbf{\Lambda}}\left(r\right)=\mathcal{M}_{f_{\lambda}}(r+1)
$$
where $\mathcal{M}_{f_{\lambda}}(r+1)$ is given by \eqref{mellinlemma}.
\end{corollary}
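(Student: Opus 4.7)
The plan is straightforward: reduce the normalized matrix-trace moment $\mu_{\mathbf{\Lambda}}(r)$ to a one-dimensional integral against the marginal eigenvalue density $f_{\lambda}$, then recognize that integral as the Mellin transform of $f_{\lambda}$ evaluated at $s=r+1$, and finally invoke Lemma \ref{lemma:mellin}.

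First, using the spectral decomposition of $\mathbf{S}$ one has $\textsf{Tr}(\mathbf{S}^{r})=\sum_{i=1}^{m}\lambda_{i}^{r}$, where $\lambda_{1},\dots,\lambda_{m}$ are the (random) eigenvalues of $\mathbf{S}$. For $r\geq 0$ the summands are non-negative and have finite expectation (since $\mathbf{H}$ is Gaussian and $\mathbf{\Lambda}$ is bounded), so expectation can be passed through the trace and, by Fubini/Tonelli, through the sum. Symmetry of the joint eigenvalue density yields the same marginal density $f_{\lambda}$ for each $\lambda_{i}$, so $\mathbb{E}\bigl\{\sum_{i}\lambda_{i}^{r}\bigr\}=m\int_{0}^{\infty}\xi^{r}f_{\lambda}(\xi)\,d\xi$. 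Dividing by $m$ gives
$\mu_{\mathbf{\Lambda}}(r)=\int_{0}^{\infty}\xi^{r}f_{\lambda}(\xi)\,d\xi$,
which by the definition \eqref{mellin} is exactly $\mathcal{M}_{f_{\lambda}}(r+1)$. Substituting $s=r+1$ into the closed form \eqref{mellinlemma} then concludes the argument.

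There is essentially no obstacle. The only admissibility check is that each Gamma factor $\Gamma(s+j-1)=\Gamma(r+j)$ appearing in \eqref{mellinlemma} is finite at $s=r+1$; for $r\geq 0$ integer and $j\geq 1$ one has $\Gamma(r+j)=(r+j-1)!$, so the expression is well defined and no poles are encountered. This point is worth emphasizing, because it is in sharp contrast with the case $r<0$, where the same naive substitution pushes $r+j$ through the non-positive integers at which $\Gamma$ has simple poles. That is precisely the singularity obstruction flagged in the introduction and is the reason the inverse moments require the separate, more delicate treatment developed in the remainder of the paper.
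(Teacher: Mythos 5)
Your proposal is correct and follows the same route the paper takes: the paper treats the corollary as an immediate consequence of the definition of the Mellin transform applied to the marginal eigenvalue density, and you have simply spelled out the routine steps (symmetrization to the unordered marginal, Tonelli, and the finiteness of $\Gamma(r+j)$ for $r\geq 0$) that the paper leaves implicit. No gap.
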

In sharp contrast to the case of positive moments ($r\geq 0$), the inverse moments can not be obtained by a crude substitution of $s$ by $-r-1$ for $r\geq 0$. The problem essentially stems from the terms in the sum wherein the Gamma function is applied to negative integers on which it is not defined.
This might give the impression that the inverse moments are infinite and cannot be thus computed. Such a quick conclusion goes, however, against the existing results on inverse moments available for wishart matrices, thus leading us to suspect  the effect of the Gamma function to be cancelled out in one way or another.  
In order to study the expected compensation effect, it is natural to analyze the behavior of $\mathcal{M}_{f_{\lambda}}(s-r+1)$  for small values of $s$. If a limit exists as $s$ goes to zero, one might expect it to coincide with the sought-for value of the $r$-th moment. 
Such an intuition is confirmed by theory under some conditions on $r$ as it can be shown from the following Lemma.

\begin{lemma}
If $r\geq n-m$, then the limit $
 \lim_{s \downarrow 0} \mathcal{M}_{f_{\lambda}}(s-r+1).
$ exists and
$$
\mu_{\mathbf{\Lambda}}(-r)= \lim_{s \downarrow 0} \mathcal{M}_{f_{\lambda}}(s-r+1).
$$
\label{lemma:limit}
\end{lemma}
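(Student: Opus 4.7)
My plan is to prove the identity by recognising both sides as integrals and transferring the limit inside. On one hand the definition of the Mellin transform gives, for every $s>0$ at which the integral converges,
\[
\mathcal{M}_{f_{\lambda}}(s-r+1)=\int_{0}^{\infty}\xi^{s-r}f_{\lambda}(\xi)\,d\xi,
\]
and on the other hand the standard identification of trace moments with the unordered-eigenvalue density gives
\[
\mu_{\mathbf{\Lambda}}(-r)=\frac{1}{m}\textsf{Tr}\mathbb{E}[\mathbf{S}^{-r}]=\mathbb{E}[\lambda^{-r}]=\int_{0}^{\infty}\xi^{-r}f_{\lambda}(\xi)\,d\xi.
\]
It therefore suffices to justify the exchange $\lim_{s\downarrow 0}\int=\int\lim_{s\downarrow 0}$, understood in $[0,+\infty]$.

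First I would split the integration range at $\xi=1$. On $(0,1)$ the integrand $\xi^{s-r}f_{\lambda}(\xi)$ increases monotonically to $\xi^{-r}f_{\lambda}(\xi)$ as $s\downarrow 0$, so monotone convergence closes the argument there; on $(1,\infty)$ it is dominated by $\xi^{1-r}f_{\lambda}(\xi)$ for $s\in(0,1)$, and this dominator is integrable because $f_{\lambda}$ inherits the super-exponential tail decay of the Gaussian entries of $\mathbf{H}$, so dominated convergence applies on that piece. Summing the two contributions yields the desired equality in the extended reals: the common value is finite precisely at the borderline $r=n-m$ and equals $+\infty$ for $r>n-m$, in agreement with the near-origin behaviour $f_{\lambda}(\xi)\sim C\xi^{n-m}$ that can be read off from the rightmost pole of the formula in Lemma \ref{lemma:mellin}.

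The hypothesis $r\geq n-m$ becomes essential if one instead derives the finiteness (or controlled blow-up) of the limit directly from the closed form \eqref{mellinlemma}, which is the route the paper appears to favour. After substituting $s-r+1$ for $s$ in \eqref{mellinlemma}, the factor $\Gamma(s+j-r)$ carries a simple pole at $s=0$ for every $j\in\{1,\ldots,r\}$, while the companion bracket $\theta_{n-m+i}^{n-m-r+j-1+s}-\sum_{k,l}[\mathbf{\Psi}^{-1}]_{k,l}\theta_{l}^{n-m-r+j-1+s}\theta_{n-m+i}^{k-1}$ is the Lagrange-interpolation residual of $\theta\mapsto\theta^{n-m-r+j-1+s}$ at the nodes $\theta_{1},\ldots,\theta_{n-m}$, and therefore vanishes at $s=0$ precisely when $n-m-r+j-1\in\{0,\ldots,n-m-1\}$, i.e.\ for $j\in\{r-n+m+1,\ldots,r\}$. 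The hypothesis $r\geq n-m$ exactly guarantees that this ``safe'' window lies inside the singular range $\{1,\ldots,r\}$, so that each pole of $\Gamma$ is cancelled by a simple zero of the bracket and each resulting $0\cdot\infty$ contribution is finite. The residual indices $j\in\{1,\ldots,r-n+m\}$, present only when $r>n-m$, correspond to negative exponents for which the bracket does not vanish term by term; the expected absorbing mechanism is the cofactor orthogonality $\sum_{i=1}^{m}\mathcal{D}(i,j)\mathcal{C}_{i,k}=\det(\mathbf{\mathcal{C}})\delta_{j,k}$, applied after re-expressing the negative-exponent bracket in terms of columns of $\mathbf{\mathcal{C}}$. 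Carrying out this re-expression, and thereby closing the pole-cancellation step, is in my view the main obstacle and would probably require either a dedicated recursion or an appeal to analytic continuation in the exponent.
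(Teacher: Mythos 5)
Your first paragraph is, in substance, the paper's own proof, executed somewhat more carefully: the paper likewise identifies $\mathcal{M}_{f_{\lambda}}(s-r+1)$ with $\int_0^\infty \xi^{s-r}f_{\lambda}(\xi)\,d\xi$ and passes the limit $s\downarrow 0$ inside the integral, controlling integrability near the origin via the density $p$ of the smallest eigenvalue of $\mathbf{H}^*\mathbf{H}$ and Edelman's result that $x^{-r}p(x)$ is integrable for $r\le n-m$, whereas you read the exponent $\xi^{\,n-m}$ off $f_{\lambda}$ directly; your split at $\xi=1$ (monotone convergence below, domination above) actually repairs the paper's single, slightly cavalier appeal to monotone convergence on the whole half-line. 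Note that your own conclusion --- the limit is finite exactly for $r\le n-m$ and equals $+\infty$ for $r>n-m$ --- shows that the hypothesis ``$r\ge n-m$'' in the statement is a typo for ``$r\le n-m$'': the paper's proof itself concludes ``for $r\le n-m$, $\mu_{\mathbf{\Lambda}}(-r)$ is finite,'' and Theorem~1 is stated only for $1\le r\le \min(m,n-m)$. You should have committed to that reading rather than treating the literal inequality as something to be rescued.

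The second half of your argument, which tries to justify ``$r\ge n-m$'' from the closed form, is where the genuine error lies. As you compute, the bracket vanishes at $s=0$ exactly for $j\in\{r-n+m+1,\dots,r\}$, while $\Gamma(s-r+j)$ has a pole at $s=0$ for every $j\in\{1,\dots,r\}$. What is needed is therefore that the ``safe'' window \emph{cover} $\{1,\dots,r\}$, i.e.\ $r-n+m+1\le 1$, which is $r\le n-m$ --- the opposite of what you assert; merely having the window \emph{contained in} $\{1,\dots,r\}$ proves nothing. Moreover, the residual indices $j\le r-n+m$ cannot be absorbed by the cofactor identity $\mathcal{D}^t\mathcal{C}=\det(\mathcal{C})\mathbf{I}_m$: that identity applies only when the bracket is (up to the factorial factor) a column of $\mathcal{C}$, i.e.\ when the exponent $n-m-r+j-1$ equals $n-m+k-1$ for some $k\ge 1$, which forces $j\ge r+1$ --- this is precisely how the paper kills $\mathcal{M}_2$ --- whereas for $j\le r-n+m$ the exponent is negative and no such re-expression exists. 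No cancellation occurs there, consistent with the divergence you had already established analytically. Dropping the second half and stating the corrected hypothesis $r\le n-m$ leaves you with a complete proof along the paper's lines.
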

\begin{proof}
Let $x\mapsto p(x)$ be the probability density function corresponding to the smallest eigenvalue of ${\bf H}{\bf H}^{*}$. Then, obviously,
\begin{equation}
\mathcal{M}_{f_{\lambda}}(s-r+1) \leq \frac{1}{\theta_1}\int_0^\infty x^{s-r}p(x)dx. 
\label{eq:limit_equation}
\end{equation}
It ensues from the Monotone convergence theorem applied to the sequence of functions $(x\mapsto x^{s-r}p(x))_{s\geq 0}$ and $\left(x\mapsto x^{s-r}f(x)\right)_{s\geq 0}$ that if limits for  the both hand sides of \eqref{eq:limit_equation} exist, they must be equal respectively to $\mu_{\mathbf{\Lambda}}(-r)$ and $\frac{1}{\theta_1}\int_0^\infty x^{-r}p(x)dx$
From Theorem 5.4 in \cite{edelman-phd}, we know that $x^{-r}p(x)$ is integrable provided that $r\leq n-m$. Therefore, for $r\leq n-m$, $\mu_{\mathbf{\Lambda}}(-r)$ is finite and satisfies:
 $$
\mu_{\mathbf{\Lambda}}(-r)=\lim_{s\downarrow 0}\mathcal{M}_{f_{\lambda}}(s-r+1) .
$$
\end{proof}
From Lemma \ref{lemma:limit}, we can see that the computation of the moments $\mu_{\mathbf{\Lambda}}(-r)$ amounts to evaluating the limit of $\mathcal{M}_{f_{\lambda}}(s-r+1)$ as $s$ goes to zero. 
A careful scrutiny of the expression of  $\mathcal{M}_{f_{\lambda}}(s-r+1)$ reveals that the sum over index $j$ makes appear two types of terms. The first type corresponds to those indices of $j$ for which $-r+j-1$ is positive. The limits of these  terms can be computed normally by setting $s$ to $0$ since  $\Gamma(-r+j-1)$ is properly defined. The second type of terms is more difficult to analyze, since it corresponds to those indices of $j$ for which $-r+j-1$ is negative. Obviously, these two types of terms cannot be handled similarly. In light of this observation, it is sensible to decompose $\mathcal{M}_{f_{\lambda}}(s-r+1)$ as the sum of two quantities depending on the value of $j$, whether it is below or above $r+1$. This decomposition writes as:
\begin{equation} \label{momentlimit}
\mathcal{M}_{f_{\lambda}}(s-r+1) 
= \mathcal{M}_1\left(s-r+1\right) +\mathcal{M}_2\left(s-r+1\right) ,
\end{equation} 
where \begin{align*}
\mathcal{M}_1\left(s\right)&=L\sum
_{j=1}^{r}\sum_{i=1}^{m}\mathcal{D}\left(i,j\right)\Gamma\left(s+j-1\right)\Biggl( \theta_{n-m+i}^{n-m+s+j-2} \\& -\sum_{l=1}^{n-m}\sum_{k=1}^{n-m}\left[\mathbf{\Psi}^{-1}\right]_{k,l}\theta_{l}^{n-m+s+j-2}\theta_{n-m+i}^{k-1}\Biggr) \\
\mathcal{M}_2\left(s\right)&=L\sum
_{j=r+1}^{m}\sum_{i=1}^{m}\mathcal{D}\left(i,j\right)\Gamma\left(s+j-1\right)\Biggl( \theta_{n-m+i}^{n-m+s+j-2} \\& -\sum_{l=1}^{n-m}\sum_{k=1}^{n-m}\left[\mathbf{\Psi}^{-1}\right]_{k,l}\theta_{l}^{n-m+s+j-2}\theta_{n-m+i}^{k-1}\Biggr).
\end{align*}
We will first handle the second term $\mathcal{M}_2\left(s-r+1\right)$, gathering indices $j$ for which $-r+j-1$ is positive. Interestingly, we can prove that its limit is zero as $s\downarrow0$, which shows that it does not contribute in the expression of the final moment. 
\begin{proposition} The term $\mathcal{M}_2\left(s-r+1\right)$ vanishes as $s$ goes to zero i.e, 
\begin{align*}
\lim_{s \rightarrow 0} \mathcal{M}_2\left(s-r+1\right)=0, \quad r=1, \cdots, m.
\end{align*}
\end{proposition}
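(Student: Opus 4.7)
My plan is to show that, although $\mathcal{M}_2(s-r+1)$ is a priori defined as a limit, each term in its defining sum is in fact continuous at $s = 0$, so the limit can be computed by direct substitution, and the resulting expression then vanishes by a cofactor-expansion argument. The summation runs over $j \geq r+1$, so the substitution $s\mapsto s-r+1$ turns $\Gamma(s+j-1)$ into $\Gamma(s-r+j)$, which at $s=0$ equals the finite value $(j-r-1)!$. All powers of $\theta$ remain non-negative, so no singular behaviour is hidden in the summand and the limit reduces to evaluating $\mathcal{M}_2(s-r+1)$ at $s=0$ inside the sum.

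The key algebraic step is to recognize that, after the above substitution, the parenthesized factor
\begin{align*}
\theta_{n-m+i}^{n-m+j-r-1} - \sum_{l,k=1}^{n-m}[\mathbf{\Psi}^{-1}]_{k,l}\theta_{l}^{n-m+j-r-1}\theta_{n-m+i}^{k-1}
\end{align*}
is, up to the factor $(j-r-1)!$, exactly the $(i, j-r)$-entry $\mathcal{C}_{i,j-r}$ of the matrix $\mathcal{C}$ whose cofactors define $\mathcal{D}(\cdot,\cdot)$ in Lemma \ref{lemma:mellin}. The identification is obtained by renaming the inner dummy indices $(l,k)\to(q,p)$, which reproduces the ordering used in the definition of $\mathcal{C}_{l,k}$. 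The factor $(j-r-1)!$ cancels the Gamma function, and setting $k=j-r$ yields
\begin{align*}
\lim_{s\to 0}\mathcal{M}_2(s-r+1) = L\sum_{k=1}^{m-r}\sum_{i=1}^{m}\mathcal{D}(i,k+r)\,\mathcal{C}_{i,k}.
\end{align*}

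Finally, I invoke Laplace's column expansion: $\sum_{i=1}^{m}\mathcal{C}_{i,k}\,\mathcal{D}(i,k+r)$ is the determinant of the matrix obtained from $\mathcal{C}$ by overwriting its $(k+r)$-th column with its $k$-th column. Because $r\geq 1$, this matrix has two identical columns, so its determinant is zero, and summing zero over $k$ gives the claim. The case $r=m$ is immediate, since then the outer sum over $j$ is empty.

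The main obstacle I anticipate is purely notational: the rearrangement that pins the inner term to $\mathcal{C}_{i,k}$ must carefully track the swap $(l,k)\leftrightarrow(q,p)$ and which $\theta$-factor carries the exponent $n-m+j-r-1$ versus $k-1$; a single misalignment would break the cancellation. Once this identification is secured, the cancellation of $\Gamma(j-r)$ against $(k-1)!$ and the column-duplication argument are routine.
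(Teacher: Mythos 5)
Your proposal is correct and follows essentially the same route as the paper: continuity of each summand at $s=0$ since $\Gamma(j-r)$ with $j\geq r+1$ is finite, identification of $\Gamma(j-r)$ times the parenthesized factor with $[\mathcal{C}]_{i,j-r}$, and vanishing of the resulting sums. Your Laplace column-expansion/repeated-column argument is just the off-diagonal case of the adjugate identity $\mathcal{D}^{t}\mathcal{C}=\det(\mathcal{C})\mathbf{I}_m$ that the paper invokes, so the two proofs coincide.
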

\begin{proof}
See Appendix A for the proof.
\end{proof}
The expression of the moment is thus totally ruled out by the contribution of the first term $\mathcal{M}_1\left(s-r+1\right)$. Before providing the expression of its limit as $s\downarrow 0$, we shall introduce the following notations:

\begin{align*}
\mathbf{a}_{j}&=\left[\theta_1^{n-m-r+j-1}, \theta_2^{n-m-r+j-1}, \cdots, \theta_{n-m}^{n-m-r+j-1} \right]^t\\
\mathbf{D}_i&=\textsf{diag}\Biggl(\log\left(\frac{\theta_{n-m+i}}{\theta_1}\right), \log\left(\frac{\theta_{n-m+i}}{\theta_2}\right), \\
&\cdots,\log\left(\frac{\theta_{n-m+i}}{\theta_{n-m}}\right)\Biggr) \\
\mathbf{b}_i&\triangleq\left[1,\theta_{n-m+i}, \cdots, \theta_{n-m+i}^{n-m-1}\right]^t .
\end{align*}
With these notations at hand, we are now in position to state the following result:
\begin{proposition} Let $p=\min\left(m,n-m\right)$, then for $1 \leq r \leq p$ we have 
\small
\begin{align*}
&\lim_{s \rightarrow 0}\mathcal{M}_1\left(s-r+1\right)\\
&=L\sum
_{j=1}^{r}\sum_{i=1}^{m}\mathcal{D}\left(i,j\right)\frac{\left(-1\right)^{r-j}}{\left(r-j\right)!}\mathbf{b}_i^t\mathbf{\Psi}^{-1} \mathbf{D}_i\mathbf{a}_j .
\end{align*}
\normalsize
\end{proposition}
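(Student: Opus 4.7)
\emph{Proof plan.} The idea is to combine a simple pole of $\Gamma$ with a simple zero of the polynomial bracket to produce a finite limit. After the substitution $s\mapsto s-r+1$ in $\mathcal{M}_1$, each $(i,j)$ summand contains the Gamma factor $\Gamma(s+j-r)$, which has a simple pole at $s=0$ with residue $(-1)^{r-j}/(r-j)!$, since $\Gamma(s-k)=\frac{(-1)^{k}}{k!\,s}+O(1)$ for $k\in\mathbb{N}$. I will show that the accompanying bracketed factor
\[
B_{i,j}(s)=\theta_{n-m+i}^{\,n-m+s+j-r-1}-\sum_{l,k=1}^{n-m}[\mathbf{\Psi}^{-1}]_{k,l}\,\theta_l^{\,n-m+s+j-r-1}\,\theta_{n-m+i}^{\,k-1}
\]
vanishes at $s=0$ with first-order Taylor coefficient $\mathbf{b}_i^t\mathbf{\Psi}^{-1}\mathbf{D}_i\mathbf{a}_j$. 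The limit of each product then equals the residue times this coefficient, summed against $\mathcal{D}(i,j)$ and the prefactor $L$.

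The key step is $B_{i,j}(0)=0$. The matrix $\mathbf{\Psi}^{-1}$ encodes Lagrange interpolation at the nodes $\theta_1,\ldots,\theta_{n-m}$: for any exponent $e$, the polynomial $P_e(x)=\sum_{k=1}^{n-m}\bigl(\sum_{l=1}^{n-m}[\mathbf{\Psi}^{-1}]_{k,l}\theta_l^{e}\bigr)x^{k-1}$ satisfies $P_e(\theta_{l'})=\theta_{l'}^{e}$ for every $l'\in\{1,\ldots,n-m\}$ and, being of degree $\leq n-m-1$, it coincides with $x^e$ identically whenever $0\leq e\leq n-m-1$. With $e=n-m+j-r-1$ and $1\leq j\leq r\leq p\leq n-m$ (this is exactly where the hypothesis $r\leq p$ is used), the range condition holds, and evaluation at $x=\theta_{n-m+i}$ gives $B_{i,j}(0)=0$.

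To get the first-order piece, I will use the same interpolation identity to factor
\[
B_{i,j}(s)=\sum_{l,k=1}^{n-m}[\mathbf{\Psi}^{-1}]_{k,l}\,\theta_l^{\,n-m+j-r-1}\,\theta_{n-m+i}^{\,k-1}\bigl(\theta_{n-m+i}^{\,s}-\theta_l^{\,s}\bigr),
\]
and expand $\theta_{n-m+i}^{s}-\theta_l^{s}=s\log(\theta_{n-m+i}/\theta_l)+O(s^{2})$. The $1/s$ from $\Gamma$ and the $s$ here cancel, and the residual double sum equals $\mathbf{b}_i^t\mathbf{\Psi}^{-1}\mathbf{D}_i\mathbf{a}_j$ once one matches entries with the definitions of $\mathbf{a}_j$, $\mathbf{b}_i$ and $\mathbf{D}_i$. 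Multiplying by $(-1)^{r-j}/(r-j)!$, summing over $i,j$, and restoring the prefactor $L$ yields the stated formula.

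The main obstacle is spotting the Lagrange-interpolation identity hidden in $\mathbf{\Psi}^{-1}$ and verifying that $r\leq p=\min(m,n-m)$ is exactly what forces the exponent $n-m+j-r-1$ into $\{0,\ldots,n-m-1\}$ for every $j\in\{1,\ldots,r\}$. Once this geometric observation is in hand, the rest is bookkeeping: the expansion of $\theta^s$, the bilinear-form repackaging, and the fact that the simplicity of the $\Gamma$ pole makes higher-order terms of $B_{i,j}$ irrelevant.
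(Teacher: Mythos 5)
Your argument is correct, and it reaches the stated formula by the same underlying mechanism as the paper --- a simple pole of $\Gamma(s-r+j)$ with residue $(-1)^{r-j}/(r-j)!$ cancelling a simple zero of the bracketed factor, the zero being a consequence of the Vandermonde structure of $\mathbf{\Psi}$ and the hypothesis $r\leq p$ --- but the execution differs in a way worth noting. The paper adds and subtracts the $s$-perturbed objects $\mathbf{\Psi}_s$, $\mathbf{b}_{s,i}$ to split $\mathcal{M}_1(s-r+1)$ into three terms: one that vanishes identically because $\mathbf{a}_{s,j}$ is literally a column of $\mathbf{\Psi}_s$ (so $\mathbf{\Psi}_s^{-1}\mathbf{a}_{s,j}=\mathbf{e}_{r-j}$), and two $O(s)$ correction terms handled separately via a Taylor expansion of $\mathbf{b}_{s,i}$ and the resolvent identity $\mathbf{\Psi}_s^{-1}-\mathbf{\Psi}^{-1}=\mathbf{\Psi}_s^{-1}(\mathbf{\Psi}-\mathbf{\Psi}_s)\mathbf{\Psi}^{-1}$. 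You instead establish $B_{i,j}(0)=0$ by the equivalent Lagrange-interpolation reading of $\mathbf{\Psi}^{-1}$ (the interpolant of $x^{e}$ at the nodes, for $0\leq e\leq n-m-1$, is $x^{e}$ itself) and then obtain the \emph{exact} factorization $B_{i,j}(s)=\sum_{l,k}[\mathbf{\Psi}^{-1}]_{k,l}\theta_l^{\,n-m+j-r-1}\theta_{n-m+i}^{\,k-1}(\theta_{n-m+i}^{s}-\theta_l^{s})$, from which the first-order coefficient $\mathbf{b}_i^t\mathbf{\Psi}^{-1}\mathbf{D}_i\mathbf{a}_j$ is read off directly. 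This buys you something concrete: you never need to invert $\mathbf{\Psi}_s$ or expand its inverse in $s$ (the paper implicitly assumes $\mathbf{\Psi}_s$ is invertible for small $s$ and that the inversion commutes with the expansion), and the two correction terms of the paper's proof collapse into a single bilinear form, with the factor $\log(\theta_{n-m+i}/\theta_l)$ --- i.e.\ the matrix $\mathbf{D}_i=\log(\theta_{n-m+i})\mathbf{I}_{n-m}-\boldsymbol{\Phi}$ --- appearing in one step rather than being assembled at the end. The bookkeeping you defer (matching the double sum to $\mathbf{b}_i^t\mathbf{\Psi}^{-1}\mathbf{D}_i\mathbf{a}_j$, and checking that the $O(s^2)$ remainder is killed by the $1/s$ pole) is routine and poses no obstacle.
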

\begin{proof}
See Appendix B for a detailed proof.
\end{proof} 
Combining the findings of the above propositions, we finally obtain  the following result: 
\begin{theorem}
For $1 \leq r \leq p$, we have
\footnotesize
\begin{align*}
\mu_{\mathbf{\Lambda}}\left(-r\right)=L\sum
_{j=1}^{r}\sum_{i=1}^{m}\mathcal{D}\left(i,j\right)\frac{\left(-1\right)^{r-j}}{\left(r-j\right)!}\mathbf{b}_i^t\mathbf{\Psi}^{-1} \mathbf{D}_i\mathbf{a}_j .
\end{align*}
\normalsize
\end{theorem}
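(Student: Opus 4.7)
The plan is to assemble the theorem directly from the machinery already set up in Section II: Lemma \ref{lemma:limit} identifies $\mu_{\mathbf{\Lambda}}(-r)$ with the one-sided limit $\lim_{s\downarrow 0}\mathcal{M}_{f_\lambda}(s-r+1)$; equation \eqref{momentlimit} splits this Mellin transform into $\mathcal{M}_1(s-r+1)+\mathcal{M}_2(s-r+1)$ according to whether $-r+j-1$ is negative (the dangerous indices $1\le j\le r$) or nonnegative (the indices $r+1\le j\le m$); and the two propositions preceding the theorem evaluate the limits of these two pieces separately. Once these ingredients are in hand, the theorem reduces to combining them.

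First I would verify that the hypothesis $1\le r\le p=\min(m,n-m)$ triggers every earlier result simultaneously. The bound $r\le n-m$ is exactly the integrability condition needed in Lemma \ref{lemma:limit} (coming from Theorem 5.4 of \cite{edelman-phd}) so that $\mu_{\mathbf{\Lambda}}(-r)$ is finite and equals the Mellin-transform limit. The bound $r\le m$ ensures that the first sum in $\mathcal{M}_1(s-r+1)$ actually runs over $j=1,\dots,r$ with $r\le m$ and also matches the range in the proposition killing $\mathcal{M}_2$. The lower bound $r\ge 1$ is what distinguishes the present inverse-moment case from the positive-moment corollary. Next, since the decomposition is a finite sum, I can take the limit term by term:
\begin{align*}
\mu_{\mathbf{\Lambda}}(-r)
&=\lim_{s\downarrow 0}\mathcal{M}_{f_\lambda}(s-r+1)\\
&=\lim_{s\downarrow 0}\mathcal{M}_1(s-r+1)+\lim_{s\downarrow 0}\mathcal{M}_2(s-r+1).
\end{align*}
The second proposition gives that the $\mathcal{M}_2$ limit vanishes, and the first proposition supplies the closed form for the $\mathcal{M}_1$ limit in terms of $\mathbf{a}_j$, $\mathbf{b}_i$, $\mathbf{D}_i$, and $\mathbf{\Psi}^{-1}$. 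Substituting yields exactly the expression claimed in the theorem.

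In this sense there is no genuine obstacle left at the theorem-statement level: the delicate work is entirely deferred to the two propositions, where one must handle the poles of $\Gamma(s+j-1)$ at $s=0$ for $j\le r$ via a Laurent expansion and show that the residues are cancelled by the bracketed geometric/Vandermonde factor (this is where the logarithmic matrix $\mathbf{D}_i$ appears, from differentiating $\theta^{s-r+j-2+n-m}$ in $s$ at the pole). The only care needed in the present proof is to keep track of ranges of summation and of the sign $(-1)^{r-j}/(r-j)!$ produced by the Laurent coefficient of $\Gamma$ at the negative integer $j-1-r$. Once those residue computations are granted from Appendices A and B, the theorem follows by a one-line combination.
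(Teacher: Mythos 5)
Your proposal is correct and follows essentially the same route as the paper: Theorem 1 is obtained exactly by combining Lemma \ref{lemma:limit} with the two propositions (the vanishing of $\mathcal{M}_2$ and the closed-form limit of $\mathcal{M}_1$), with all the delicate pole-cancellation work deferred to the appendices. Your additional check that $1\le r\le p$ simultaneously activates the integrability condition $r\le n-m$ and the summation range $r\le m$ is a welcome clarification that the paper leaves implicit.
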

\begin{remark}
Without loss of generality, we can easily show that  matrix $\mathbf{\Lambda}$ can be considered as diagonal with diagonal elements $\theta_1,\cdots,\theta_N$. This can be seen from  the eigendecomposition of $\mathbf{\Lambda}$ as follows 
\begin{equation} \label{eigen}
\mathbf{\Lambda}=\mathbf{U}^*\mathbf{D}\mathbf{U},
\end{equation}
where $\mathbf{D}=\textsf{diag}\left(\theta_1, \theta_2, \cdots, \theta_n\right)$ and $\mathbf{U}$ is a unitary matrix, i.e. $\mathbf{U}^*\mathbf{U}=\mathbf{U}\mathbf{U}^*=\mathbf{I}_n$. 
Then,
\begin{align*}
\mathbf{S}&=\mathbf{H}^*\mathbf{U}^*\mathbf{D}\mathbf{U}\mathbf{H}\\
&=\left(\mathbf{U}\mathbf{H}\right)^*\mathbf{D}\left(\mathbf{U}\mathbf{H}\right) \\
&=\mathbf{G}^* \mathbf{D}\mathbf{G},
\end{align*} 
where $\mathbf{G}=\mathbf{U}\mathbf{H}$. Since the wishart distribution is unitarily invariant, $\mathbb{\bf G}$ has the same distribution as ${\bf H}$. Therefore,
\begin{equation} \label{moment}
\mu_{\mathbf{\Lambda}}\left(r\right)=\frac{1}{m}\textsf{Tr}\left(\mathbb{E}_{\mathbf{G}}\{\left(\mathbf{G}^* \mathbf{D}\mathbf{G}\right)^{r}\}\right), \quad r \in \mathbb{Z}
\end{equation}
\end{remark}
\subsection{Asymptotic Inverse Moments}
It is well-known from standard results on random matrix theory that   moments of Gram random matrices, can be well-approximated, when properly scaled and for $m$ and $n$ large enough,  by deterministic quantities. However, the derivation of these deterministic approximations differs from the exact approach in several respects, namely it does not rely on the same tool of the Mellin transform  and does not necessarily yield  simple closed-form expressions for any high order moment. As a matter of fact, it is shown in \cite{jacob} that except the special case of $\boldsymbol{\Lambda}$ coinciding with the identity matrix, the computation of higher positive order moments has to be performed iteratively. This also holds  for the case of asymptotic inverse moments, which can be derived using the same approach as in \cite{jacob}.  
This represents the main goal of this section, for which details will be provided for sake of completeness.
 The obtained asymptotic moments will be compared later with the exact ones derived in the previous section. 

Prior to stating the main algorithm leading to the asymptotic inverse moments, we shall first review the following results from random matrix theory.  

\begin{mydef} (Empirical Spectral Distribution) Let $\mathbf{A} \in \mathbb{C}^{m \times m}$ be a Hermitian matrix with eigenvalues $\lambda_1, \lambda_2, \cdots, \lambda_m$. The empirical spectral distribution $F^{\mathbf{A}}$ of $\mathbf{A}$ is defined as 
\begin{equation}
F^{\mathbf{A}}\left(x\right)=\frac{1}{m}\sum_{i=1}^m \mathbf{1}\left(\lambda \leq x\right)
\end{equation}
\end{mydef}
Working directly on the empirical distribution function $F^{{\bf A}}$ is in general a tedious task. Instead, a characterization of its  Stieltjes transform is often considered. The Stieltjes transform corresponding to the empirical distribution $F^{{\bf A}}$ is defined as:
\begin{mydef}(Stieltjes Transform) \label{Stieltjesdef}
For a hermitian matrix  $\mathbf{A}$, the Stieltjes transform is defined as
\begin{equation}
\label{stieltjes}
\begin{split}
\hat{m}_{\mathbf{A}}\left(z\right)& \triangleq \int \frac{1}{\lambda-z}dF^{\mathbf{A}}\left(\lambda\right) \\
&= \frac{1}{m}\textsf{Tr}\left(\mathbf{A}-z \mathbf{I}_m\right)^{-1}
\end{split}
\end{equation}
\end{mydef}
From definition \ref{Stieltjesdef}, it is easy to prove
\begin{equation}
\left(\frac{\partial^k \hat{m}_{\mathbf{A}}\left(z\right) }{\partial z^k}\right)_{z=0}=\frac{k!}{m}\textsf{Tr}\left(\mathbf{A}^{-\left(k+1\right)}\right)
\end{equation}

Let $\mathbf{D}$ be the diagonal matrix as defined in (\ref{eigen}). Then, the Stieltjes Transform (ST) of the empirical measure of $\frac{1}{m}{\bf S}$ converges to a deterministic measure whose ST $\underline{m}\left(z\right)$ is the solution of the following fixed point equation \cite{silverstein}:
\begin{equation} \label{mmoment0}
\underline{m}\left(z\right)=\frac{1}{-z+\frac{1}{m}\sum_{k=1}^n\frac{\left[\mathbf{D}\right]_{k,k}}{1+\left[\mathbf{D}\right]_{k,k}\underline{m}\left(z\right)}}.
\end{equation}
Denote by $\underline{m}^{(r)}$ the $r$-th derivative of $\underline{m}(z)$ at $z=0$.   Along the same arguments as in \cite{jacob}, we can prove that $\underline{m}^{(r)}$ is a consistent estimate of $r!m^r \textsf{Tr}\left(\mathbf{H}^*\boldsymbol{\Lambda}\mathbf{H}\right)^{-\left(r+1\right)}$. This suggests in particular estimating the scaled inverse moments  $\frac{1}{m}\textsf{Tr}(\left(\frac{1}{m}{\bf S}\right)^{-r})$ by $\frac{1}{r!}\underline{m}^{(r)}$. Closed-form expressions for the derivatives of $\underline{m}^{(r)}$ do not exist, but  they can be numerically computed recursively using the result of the following Theorem.
\begin{theorem}
Let $p \geq 1$ and $f_k\left(z\right)=-\frac{1}{1+\left[\mathbf{D}\right]_{k,k}\underline{m}\left(z\right)}$. 
Denote by $f_k^{(p)}$ the $p$-th derivative of $f_k(z)$ at $z=0$. Then, the following relations hold true:
\begin{equation} \label{relation1}
\begin{split}
 p\underline{m}^{\left(p-1\right)} &+\frac{\underline{m}^{\left(p\right)}}{m}\sum_
{k=1}^n \frac{\left[\mathbf{D}\right]_{k,k}f^{\left(0\right)}_k}{1+\left[\mathbf{D}\right]_{k,k}m\left(0\right)} \\
&+\frac{1}{m}\sum_{k=1}^n\sum_{l=1}^{p-1}\binom {p} {l}\frac{\left[\mathbf{D}\right]_{k,k}\underline{m}^{\left(l\right)}f^{\left(p-l\right)}_k}{1+\left[\mathbf{D}\right]_{k,k}\underline{m}\left(0\right)}=0,
\end{split}
\end{equation}
\begin{align} \label{relation2}
f^{\left(p\right)}_k+\frac{\left[\mathbf{D}\right]_{k,k}\underline{m}^{\left(p\right)}f^{\left(0\right)}_k}{1+\left[\mathbf{D}\right]_{k,k}\underline{m}\left(0\right)}+\sum_{l=1}^{p-1}\binom {p} {l}\frac{\left[\mathbf{D}\right]_{k,k}\underline{m}^{\left(l\right)}f^{\left(p-l\right)}_k}{1+\left[\mathbf{D}\right]_{k,k}\underline{m}\left(0\right)}=0.
\end{align}
\end{theorem}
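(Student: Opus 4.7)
The plan is to establish \eqref{relation2} first, since it is the simpler identity and will feed directly into the derivation of \eqref{relation1}. Starting from the defining relation $f_k(z)\bigl(1+[\mathbf{D}]_{k,k}\underline{m}(z)\bigr)=-1$, I will differentiate both sides $p$ times using the Leibniz product rule, evaluate at $z=0$, and peel off the endpoint contributions $l=0$ and $l=p$ of the resulting convolution sum $\sum_{l=0}^{p}\binom{p}{l}\underline{m}^{(l)}f_k^{(p-l)}$. The $l=0$ term combines with the bare $f_k^{(p)}$ to produce the common factor $1+[\mathbf{D}]_{k,k}\underline{m}(0)$, which, after division, yields the leading $f_k^{(p)}$ of \eqref{relation2}; the $l=p$ term gives the $\underline{m}^{(p)}$ contribution, and the middle range $1\le l\le p-1$ reassembles into the displayed convolution.

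For \eqref{relation1}, the first step is to recast the fixed point equation \eqref{mmoment0} in a form in which the $f_k$ appear linearly. Multiplying \eqref{mmoment0} through by its denominator and using the elementary identity
\[
\frac{[\mathbf{D}]_{k,k}\underline{m}(z)}{1+[\mathbf{D}]_{k,k}\underline{m}(z)} = 1 + f_k(z),
\]
which follows directly from the definition of $f_k$, the fixed point equation collapses to
\[
-z\,\underline{m}(z) + \frac{n}{m} + \frac{1}{m}\sum_{k=1}^{n} f_k(z) = 1.
\]
Differentiating $p$ times, evaluating at $z=0$, and noting that $\tfrac{d^p}{dz^p}[z\underline{m}(z)]\bigr|_{z=0}=p\,\underline{m}^{(p-1)}$ by Leibniz while the two constants drop out, I obtain the clean intermediate identity $p\,\underline{m}^{(p-1)}=\tfrac{1}{m}\sum_{k}f_k^{(p)}$.

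The final step is to eliminate $f_k^{(p)}$ on the right-hand side by substituting the expression furnished by \eqref{relation2}. The $\underline{m}^{(p)}$ piece factors out of the sum over $k$ and reproduces the second term of \eqref{relation1}, while the inner sum over $l=1,\dots,p-1$ reproduces the third; transposing to the left-hand side gives exactly \eqref{relation1}. The main obstacle I anticipate is purely bookkeeping: isolating the endpoint indices in each Leibniz expansion so that the factor $1+[\mathbf{D}]_{k,k}\underline{m}(0)$ can be divided out in \eqref{relation2}, and then confirming that the back-substitution into the differentiated fixed point equation reassembles into the displayed form of \eqref{relation1} with no stray terms. No deeper analytic input is required; the whole argument is a disciplined application of the Leibniz rule coupled with the two algebraic identities satisfied by $\underline{m}$ and $f_k$.
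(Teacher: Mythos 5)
Your proposal is correct and follows essentially the same route as the paper's Appendix C: rewrite the fixed point equation as $-z\underline{m}(z)+\tfrac{n}{m}+\tfrac{1}{m}\sum_k f_k(z)=1$, apply Leibniz to obtain $p\underline{m}^{(p-1)}=\tfrac{1}{m}\sum_k f_k^{(p)}$, derive \eqref{relation2} from $f_k(1+[\mathbf{D}]_{k,k}\underline{m})=-1$ by isolating the $l=0$ and $l=p$ terms, and back-substitute. The only difference is the order in which the two relations are established, which is immaterial.
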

\begin{proof}
See Appendix C for detailed proof.
\end{proof}
Based on the previous theorem, an algorithm can be provided in order to recursively compute the higher order derivatives, $\underline{m}^{\left(k\right)}$. These values will thus immediately serve to compute the deterministic approximations for the moments.

\begin{algorithm}
\caption{Asymptotic inverse moments computation}
\label{CHalgorithm}
\begin{algorithmic}[1]
\\ Compute $\underline{m}\left(0\right)$ using (\ref{mmoment0})
\\ Compute $f_k\left(0\right)=-\frac{1}{1+\left[\mathbf{D}\right]_{k,k}\underline{m}\left(0\right)}$ 
\For {$i=1 \to p$}
 \\ compute $\underline{m}^{\left(i\right)}$ using (\ref{relation1})
 \\ compute $f^{\left(i\right)}_k$ using (\ref{relation2})
\EndFor
\end{algorithmic}
\end{algorithm}

\subsection{Numerical Examples}
We validate the theoretical result stated in Theorem 1 for different values of $m$ and $n$. In particular we compare $\mu_{\mathbf{\Lambda}}\left(-r\right)$ with the normalized asymptotic moments $\frac{\underline{m}^{(r)}}{r!m^{r+1}}$ and the empirical moments obtained by Montecarlo simulations. \\
We start by verifying the result in Figure 1, in the case where $\mathbf{\Lambda}$ is a correlation matrix having the following structure 
\begin{equation}
\label{scatter}
\left[\mathbf{\Lambda}\right]_{i,j}=J_0\left(\pi\left|i-j\right|^2\right),
\end{equation}
where $J_0\left(.\right)$ is the zero-order Bessel function of the first kind. This kind of matrices is used to model the correlation between transmit antennas in a dense scattering environment. 
For simulations, we set $m=3$ and vary $n$ such that $n > m$. \\
In Figure 2, we compare the same quantities in the case where $\mathbf{\Lambda}$ is a random positive definite matrix generated as follows\footnote{We use such matrices to make sure that the obtained results are applicable for broad class of positive definite matrices.}
\begin{equation} \label{lambdarandom}
\mathbf{\Lambda}=\mathbf{I}_n+\mathbf{W}^*\mathbf{W},
\end{equation}
where $\mathbf{W}$ is a $\left(n \times n\right)$ matrix with i.i.d zero-mean unit variance complex Gaussian random entries.
\begin{figure}[t!]
\label{blueperformance}
        \centering
    \includegraphics[width=3.5in]{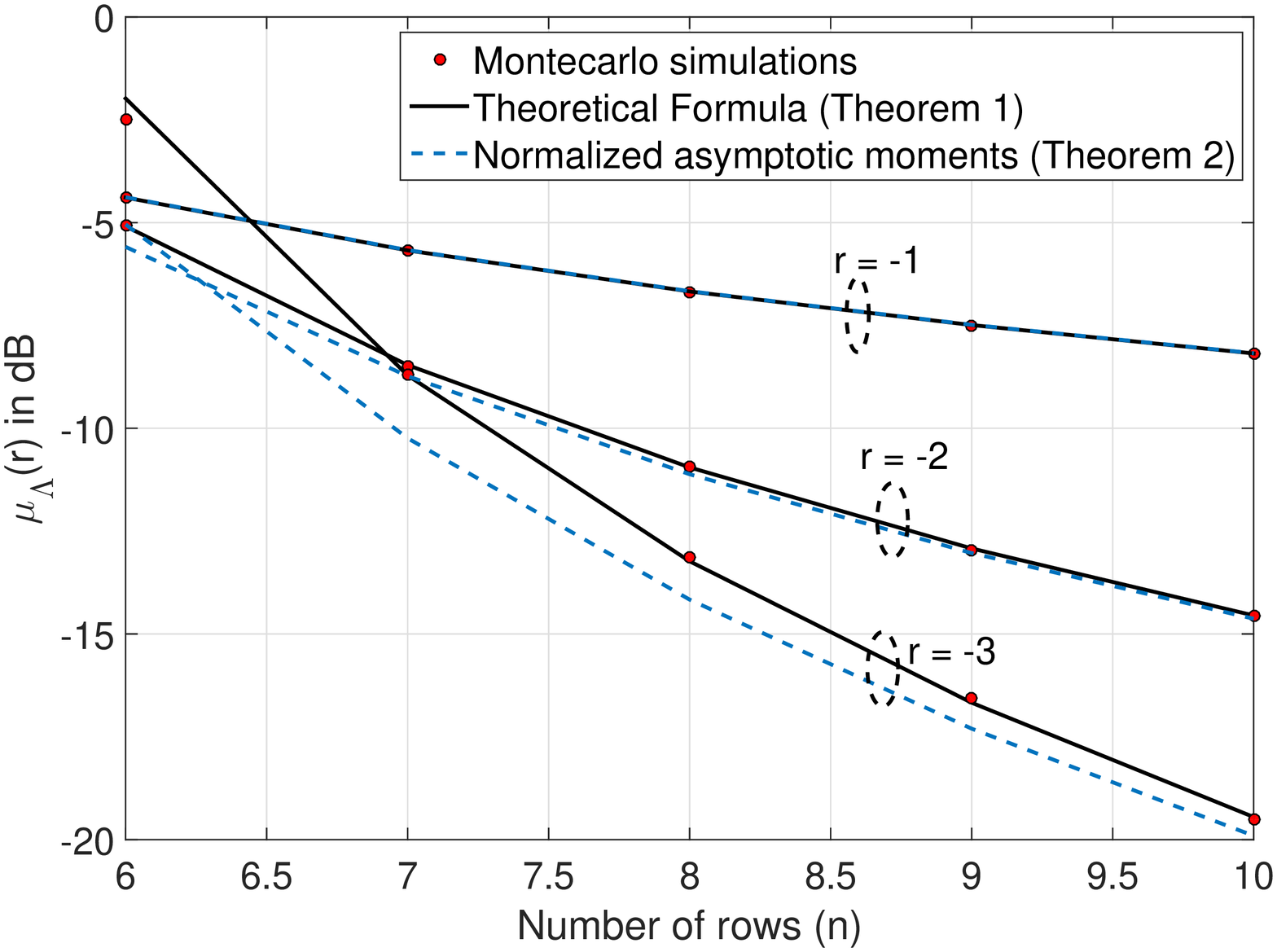}
\caption{Inverse moments for $\mathbf{\Lambda}$ defined as in (\ref{scatter}): A comparison between theoretical result (Theorem 1), normalized asymptotic moments (Theroem 2) and  Montecarlo simulations ($10^4$ realizations).}
\label{fig7}
\end{figure}
\begin{figure}[t!]
\label{blueperformance}
        \centering
    \includegraphics[width=3.5in]{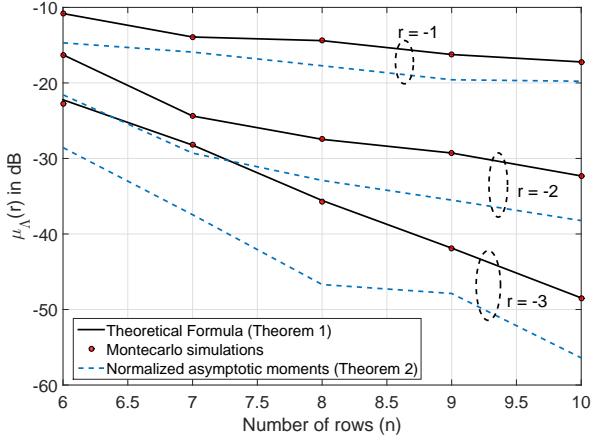}
\caption{Inverse moments for $\mathbf{\Lambda}$ defined as in (\ref{lambdarandom}): A comparison between theoretical result (Theorem 1), normalized asymptotic moments (Theroem 2) and  Montecarlo simulations ($10^4$ realizations).}
\label{fig7}
\end{figure} \\
In both Figures 1 and 2, the theoretical result of Theorem 1 perfectly matches the montecarlo simulations, however the normalized asymptotic moments derived in Theorem 2 only matches the exact results for the case where $r=-1, -2$ where the model in (\ref{scatter}) is adopted. This  can be explained by the fact that the use of normalized asymptotic moments lead to inaccurate results  in the regime of fixed $m$ and $n$  and that the result of Theorem 1 is more suitable in this case. 

\section{Applications of The Inverse moments} 

The computation of inverse moments of  one-sided correlated Gram matrices is paramount to many applications of signal processing. For sake of illustration, we will discuss in the sequel applications of our results to the fields of linear estimation and covariance matrix estimation. 
\subsection{Linear Estimation}
The problem of estimating an unknown signal from a sequence of observations has been widely studied in the literature \cite{kailath,sayed,vpoor} and can be solved if joint statistics or cross correlations of the unknown signal and the observations vector are available. In this line, linear models are a special case where, the input and the output are linearly related as 
\begin{equation} \label{linsys}
\mathbf{y}=\mathbf{H}\mathbf{x}+\mathbf{z},
\end{equation}
where $\mathbf{y} \in \mathbb{C}^{n \times 1}$ is the observations vector, $\mathbf{H}\in \mathbb{C}^{n \times m}$ the channel matrix, $\mathbf{x} \in \mathbb{C}^{m \times 1}$ the unknown signal with covariance matrix $\mathbf{\Sigma}_x$ and $\mathbf{z} \in \mathbb{C}^{n \times 1}$ the noise vector with covariance matrix $\mathbf{\Sigma}_z$. As stated earlier, in order to recover $\mathbf{x}$, joint statistics are required. However, acquiring joint statistics is generally a difficult task either because of the unknown nature of the signal or simply because of the unavailability of the statistics. To overcome this issue,  linear estimators can be viewed as a good alternative.  They are merely based on applying a linear transformation to the observation vector. 
 Obviously, this is a sub-optimal strategy in regards of the minimization of  the \textit{mean square error}, but it is more tractable and permits to explicitly analyze  performances. 
In Table I, we review the explicit expressions of the unknown signal for different estimation techniques 
\begin{table*}[t!] \label{tableestimators}
\caption{Linear Estimation Techniques depending on the available Statistics}
\begin{center}
\begin{tabular}{|p{2cm}||p{1.6cm}||p{4.5cm}||p{3.5cm}|}
  \hline
  \rowcolor{lightgray}
  Linear estimator & Required Statistics & Estimated signal, $\hat{\mathbf{x}}$ & Error covariance matrix, $\mathbb{E}\left(\mathbf{x}-\hat{\mathbf{x}}\right)\left(\mathbf{x}-\hat{\mathbf{x}}\right)^*$ \\\hline
LS & $\emptyset$ & $\left(\mathbf{H}^*\mathbf{H}\right)^{-1}\mathbf{H}^*\mathbf{y}$ &  $\left(\mathbf{H}^*\mathbf{H}\right)^{-1}$\\\hline
  BLUE& $\mathbf{\Sigma}_z$ & $\left(\mathbf{H}^*\mathbf{\Sigma}_z^{-1}\mathbf{H}\right)^{-1}\mathbf{H}^*\mathbf{\Sigma}_z^{-1}\mathbf{y}$ & $\left(\mathbf{H}^*\mathbf{\Sigma}_z^{-1}\mathbf{H}\right)^{-1}$\\\hline
 LMMSE  & $\mathbf{\Sigma}_z$, $\mathbf{\Sigma}_x$ & $\left(\mathbf{\Sigma}_x^{-1}+\mathbf{H}^*\mathbf{\Sigma}_z^{-1}\mathbf{H}\right)^{-1}\mathbf{H}^*\mathbf{\Sigma}_z^{-1}\mathbf{y}$ & $\left(\mathbf{\Sigma}_x^{-1}+\mathbf{H}^*\mathbf{\Sigma}_z^{-1}\mathbf{H}\right)^{-1}$\\
  \hline
\end{tabular}
\label{TableSNR}
\end{center}
\end{table*}
depending on the informations available about the signal and the noise statistics. 
In what follows, we make the following assumptions:
\begin{itemize}
\item $\mathbf{H}$ is a $\left(n \times m\right)$ matrix with \textit{i.i.d} complex zero mean unit variance Gaussian random entries
\item $\mathbf{z}$ is a $\left(n \times 1\right)$  zero mean additive Gaussian noise with covariance matrix $\mathbf{\Sigma}_z=\mathbb{E}\{\mathbf{z}\mathbf{z}^*\}$, i.e. \\ $\mathbf{z} \sim \mathcal{CN}\left(\mathbf{0}_{n},\mathbf{\Sigma}_z\right)$.
\end{itemize}

\subsubsection{An Exact expression for The BLUE Average Estimation Error}\label{sysmod}

Let $n > m$ and consider the same linear system as in (\ref{linsys}). With the the noise covariance matrix $\mathbf{\Sigma}_z$ at hand, the best linear unbiased estimator (BLUE) \cite{sayed} recovers $\mathbf{x}$ as:
\begin{equation}\label{blue}
\begin{split}
\hat{\mathbf{x}}_{blue}&=\left(\mathbf{H}^*\mathbf{\Sigma}_z^{-1}\mathbf{H}\right)^{-1}\mathbf{H}^*\mathbf{\Sigma}_z^{-1}\mathbf{y} \\
&=\mathbf{x}+\left(\mathbf{H}^*\mathbf{\Sigma}_z^{-1}\mathbf{H}\right)^{-1}\mathbf{H}^*\mathbf{\Sigma}_z^{-1}\mathbf{z} \\
&=\mathbf{x}+\mathbf{e}_{blue},
\end{split}
\end{equation}
where $\mathbf{e}_{blue}=\left(\mathbf{H}^*\mathbf{\Sigma}_z^{-1}\mathbf{H}\right)^{-1}\mathbf{H}^*\mathbf{\Sigma}_z^{-1}\mathbf{z}$ is the residual error after applying the BLUE. We denote by $\mathbf{\Sigma}_{e,blue}=\mathbb{E}\{\mathbf{e}_{blue}\mathbf{e}_{blue}^*\}$ the covariance matrix of $\mathbf{e}_{blue}$, then $\mathbf{\Sigma}_{e,blue}=\left(\mathbf{H}^*\mathbf{\Sigma}_z^{-1}\mathbf{H}\right)^{-1}$. Using the result of Theorem 1, the average estimation error is thus given by:
\begin{equation}
\label{avgerrorblue}
\begin{split}
\mathbb{E}_{\mathbf{H}}\{ \Vert\hat{\mathbf{x}}_{blue}-\mathbf{x}\Vert^2\}&=\mathbb{E}_{\mathbf{H}}\textsf{Tr}\left(\mathbf{\Sigma}_{e,blue}\right) \\
&=\mathbb{E}_{\mathbf{H}}\textsf{Tr}\left(\left(\mathbf{H}^*\mathbf{\Sigma}_z^{-1}\mathbf{H}\right)^{-1}\right) 
\\&=m\mu_{\mathbf{\Lambda}}\left(-1\right),
\end{split}
\end{equation}
where $\mathbf{\Lambda}=\mathbf{\Sigma}_z^{-1}$. \\
For simulation purposes, we set $m=3$, and consider $\mathbf{\Lambda}$ as in (\ref{scatter}) and (\ref{lambdarandom}).  
Then, we compare the empirical average estimation error using Montecarlo simulaton for different values of $n$ with the theoretical result derived in Theorem 1. 
As shown in Figure 3, the theoretical performance exactly matches the exact performance of the BLUE in terms of average estimation error for both models of $\mathbf{\Lambda}$ in (\ref{scatter}) and (\ref{lambdarandom}).
\begin{figure}[t!]
\label{blueperformance}
        \centering
    \includegraphics[width=3.5in]{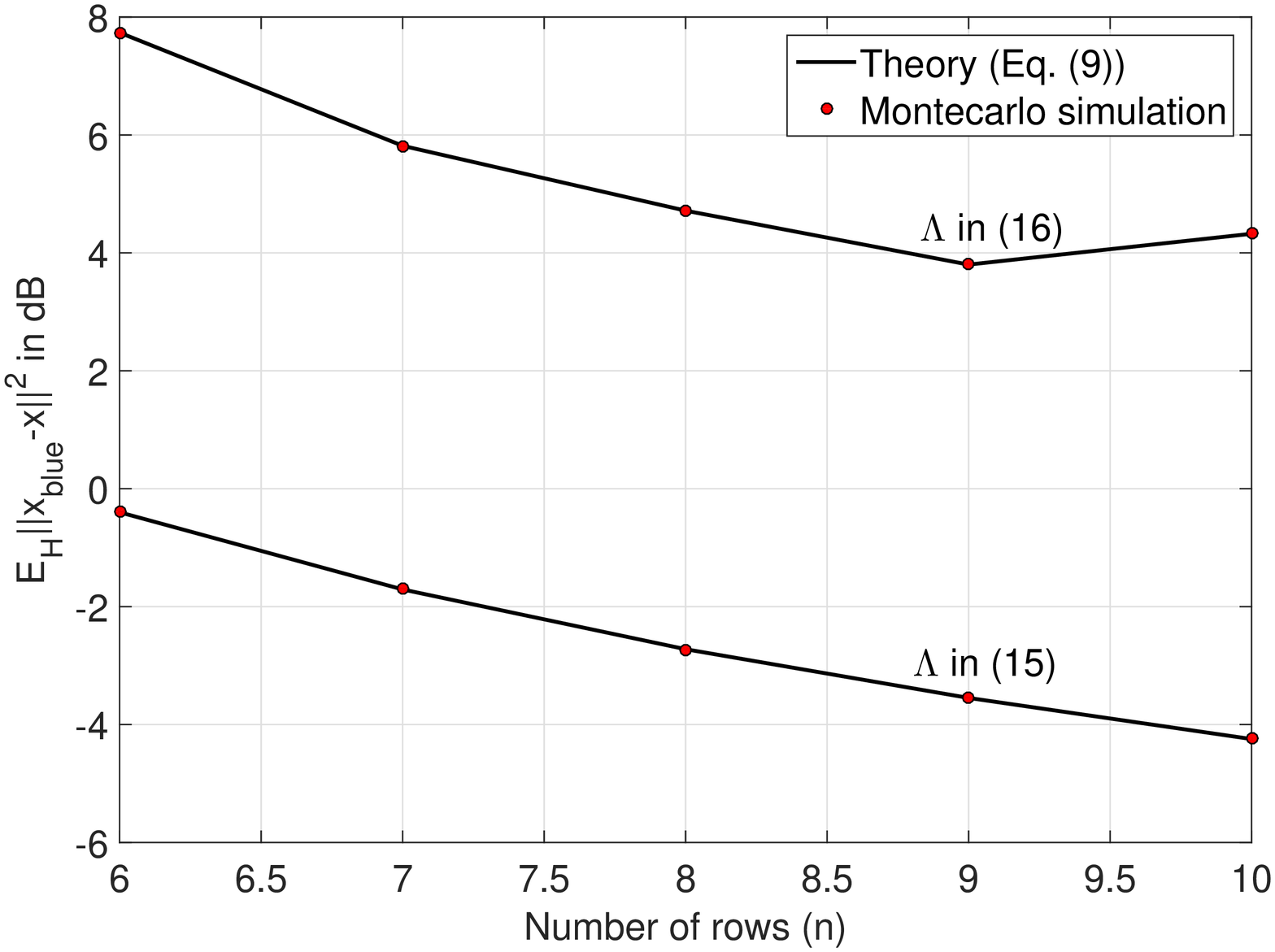}
\caption{BLUE average estimation error for $m=3$: Montecarlo simulation ($10^4$ realizations) versus theory (Theorem 1)}
\label{fig7}
\end{figure}
\subsubsection{Approximation of the LMMSE average estimation error}
Consider the linear system as in (\ref{linsys}), where we assume additionally that the covariance matrix of the unknown signal $\mathbf{x}$ is known and given by $\mathbf{\Sigma}_x$.  The linear minimum mean square error estimate (LMMSE) of $\mathbf{x}$ is thus given by:
\begin{equation}
\hat{\mathbf{x}}_{lmmse}=\left(\mathbf{\Sigma}_x^{-1}+\mathbf{H}^*\mathbf{\Sigma}_z^{-1}\mathbf{H}\right)^{-1}\mathbf{H}^*\mathbf{\Sigma}_z^{-1}\mathbf{y}
\end{equation}
Consequently, the estimation error can be calculated as 
\begin{align}
\mathbf{e}_{lmmse}=\left(\mathbf{\Sigma}_x^{-1}+\mathbf{H}^*\mathbf{\Sigma}_z^{-1}\mathbf{H}\right)^{-1}\mathbf{H}^*\mathbf{\Sigma}_z^{-1}\mathbf{z}
\end{align}
By  standard computations and based on the result derived in \cite{sayed}, the error covariance matrix is given by
\begin{equation} \label{covarianceMMSE}
\mathbf{\Sigma}_{e,lmmse}=\left(\mathbf{\Sigma}_x^{-1}+\mathbf{H}^*\mathbf{\Sigma}_z^{-1}\mathbf{H}\right)^{-1}
\end{equation}
Therefore, the average estimation error for the LMMSE estimator is given by 
\begin{equation}
\begin{split}
\mathbb{E}_{\mathbf{H}}\{ \Vert\hat{\mathbf{x}}_{lmmse}-\mathbf{x}\Vert^2\}&=\mathbb{E}_{\mathbf{H}}\textsf{Tr}\left(\mathbf{\Sigma}_{e,lmmse}\right) \\
&=\mathbb{E}_{\mathbf{H}}\textsf{Tr}\left(\left(\mathbf{\Sigma}_x^{-1}+\mathbf{H}^*\mathbf{\Sigma}_z^{-1}\mathbf{H}\right)^{-1}\right) 
\end{split}
\end{equation}
Evaluating the LMMSE average estimation error is in general very difficult, however, for the simple case where $\mathbf{\Sigma}_x=\sigma_x^2 \mathbf{I}_{n}$, it is possible to obtain an approximation depending on the value of $\sigma_x^2$. In the following theorem, we provide tight approximations for the LMMSE average estimation error for the cases: $\sigma_x^2 \gg 1$ (High SNR regime) and $\sigma_x^2 \ll 1$ (Low SNR regime).
\begin{theorem}\label{approximation}
Let $\mathbf{\Lambda}=\mathbf{\Sigma}_z^{-1}$. Then, the LMMSE average estimation error at both the high SNR regime ($\sigma_x^2 \gg 1$) and the low SNR regime ($\sigma_x^2 \ll 1$) is  given by 
\begin{enumerate}
\item High SNR regime:
\begin{equation}\label{theorem2}
\begin{split}
&\mathbb{E}_{\mathbf{H}}\{ \Vert\hat{\mathbf{x}}_{lmmse}-\mathbf{x}\Vert^2\}\\
&=m\sum_{k=0}^{l} \frac{\left(-1\right)^k}{\sigma_x^{2k}} \mu_{\mathbf{\Lambda}}\left(-k-1\right)+o\left(\sigma_x^{-2r}\right)
\end{split}
\end{equation}
where $l\leq p-1$ with $p=\min(m,n-m)$. 
\item Low SNR regime:
\begin{align}\label{theorem2}
\mathbb{E}_{\mathbf{H}}\{ \Vert\hat{\mathbf{x}}_{lmmse}-\mathbf{x}\Vert^2\}=m\sum_{k=0}^{\infty} \left(-1\right)^k\sigma_x^{2k+2} \mu_{\mathbf{\Lambda}}\left(k\right)
\end{align}
\end{enumerate}

\end{theorem}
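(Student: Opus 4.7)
The starting point is the identity
\[
\mathbb{E}_{\mathbf{H}}\{\|\hat{\mathbf{x}}_{lmmse}-\mathbf{x}\|^{2}\}=\mathbb{E}_{\mathbf{H}}\textsf{Tr}\bigl[(\sigma_x^{-2}\mathbf{I}_m+\mathbf{S})^{-1}\bigr],
\]
where $\mathbf{S}=\mathbf{H}^{*}\mathbf{\Lambda}\mathbf{H}$. The plan is to expand this resolvent by a Neumann-type series, pivoted around two different operators depending on the SNR regime, and then identify each coefficient with a moment $\mu_{\mathbf{\Lambda}}(\cdot)$ supplied by Theorem~1.

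For the high SNR regime ($\sigma_x^{2}\gg 1$), I would factor
\[
(\sigma_x^{-2}\mathbf{I}_m+\mathbf{S})^{-1}=\mathbf{S}^{-1}(\mathbf{I}_m+\sigma_x^{-2}\mathbf{S}^{-1})^{-1},
\]
apply the exact finite Neumann identity $(\mathbf{I}_m+\mathbf{X})^{-1}=\sum_{k=0}^{l}(-\mathbf{X})^{k}+(-\mathbf{X})^{l+1}(\mathbf{I}_m+\mathbf{X})^{-1}$ with $\mathbf{X}=\sigma_x^{-2}\mathbf{S}^{-1}$, and take trace and expectation termwise. The principal part yields $m\sum_{k=0}^{l}(-1)^{k}\sigma_x^{-2k}\mu_{\mathbf{\Lambda}}(-k-1)$ provided $l+1\leq p$, using the finiteness guaranteed by Lemma~\ref{lemma:limit} and the closed form of Theorem~1. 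The explicit remainder $(-1)^{l+1}\sigma_x^{-2(l+1)}(\sigma_x^{-2}\mathbf{I}_m+\mathbf{S})^{-1}\mathbf{S}^{-(l+1)}$ is then handled with the positive-semidefinite bound $(\sigma_x^{-2}\mathbf{I}_m+\mathbf{S})^{-1}\preceq\sigma_x^{2}\mathbf{I}_m$, which reduces it to an expression proportional to $\sigma_x^{-2l}\,\mathbb{E}\textsf{Tr}(\mathbf{S}^{-(l+1)})$ and delivers the decay claimed by the theorem.

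For the low SNR regime ($\sigma_x^{2}\ll 1$), I would instead factor
\[
(\sigma_x^{-2}\mathbf{I}_m+\mathbf{S})^{-1}=\sigma_x^{2}(\mathbf{I}_m+\sigma_x^{2}\mathbf{S})^{-1},
\]
expand the bracketed factor formally as $\sum_{k=0}^{\infty}(-\sigma_x^{2})^{k}\mathbf{S}^{k}$, and replace $\frac{1}{m}\mathbb{E}\textsf{Tr}(\mathbf{S}^{k})$ by $\mu_{\mathbf{\Lambda}}(k)$, which is available for every $k\geq 0$ through the positive-moment Mellin substitution already handled in the corollary following Lemma~\ref{lemma:mellin}. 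The identity is to be read as a Poincar\'e asymptotic expansion: truncating at $K$ terms leaves an exact remainder $\sigma_x^{2(K+2)}\mathbb{E}\textsf{Tr}[\mathbf{S}^{K+1}(\mathbf{I}_m+\sigma_x^{2}\mathbf{S})^{-1}]=O(\sigma_x^{2(K+2)})$ for each fixed $K$, so every partial sum is asymptotically correct as $\sigma_x^{2}\downarrow 0$.

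The main obstacle sits on the high SNR side: the truncation level $l\leq p-1$ forced by nonexistence of higher inverse moments must be reconciled with the need to bound the Neumann remainder without touching the forbidden moment $\mu_{\mathbf{\Lambda}}(-l-2)$. The crucial trick is the operator bound $(\sigma_x^{-2}\mathbf{I}_m+\mathbf{S})^{-1}\preceq\sigma_x^{2}\mathbf{I}_m$, which converts the extra $\mathbf{S}^{-1}$ factor that would otherwise appear in the remainder into a scalar $\sigma_x^{2}$, allowing the remainder to be dominated by $m\mu_{\mathbf{\Lambda}}(-l-1)$, a quantity that is finite precisely in the range permitted by Theorem~1. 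Once this trick is in place, the rest reduces to routine termwise integration.
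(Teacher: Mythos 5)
Your proposal is correct and follows essentially the same route as the paper's Appendix D: a Neumann/Taylor expansion of the resolvent $\left(\sigma_x^{-2}\mathbf{I}+\mathbf{S}\right)^{-1}$ pivoted on $\mathbf{S}^{-1}$ at high SNR and on $\sigma_x^{2}\mathbf{S}$ at low SNR, followed by termwise expectation and identification of the coefficients with $\mu_{\mathbf{\Lambda}}(\cdot)$. Your handling of the remainders (the finite Neumann identity together with the bound $\left(\sigma_x^{-2}\mathbf{I}+\mathbf{S}\right)^{-1}\preceq\sigma_x^{2}\mathbf{I}$, and the Poincar\'e-asymptotic reading of the low-SNR series, whose expectation-level version indeed has zero radius of convergence) is actually more careful than the paper's, which simply truncates a formal infinite series; the only residue is that your operator bound gives $O(\sigma_x^{-2l})$ rather than the claimed $o(\sigma_x^{-2l})$, which requires one further dominated-convergence observation that $\sigma_x^{-2}\left(\mathbf{S}+\sigma_x^{-2}\mathbf{I}\right)^{-1}\to 0$ while staying dominated by $\mathbf{I}$.
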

\begin{proof}
See Appendix D for Proof.
\end{proof}
\begin{figure}[t!]
        \centering
    \includegraphics[width=3.5in]{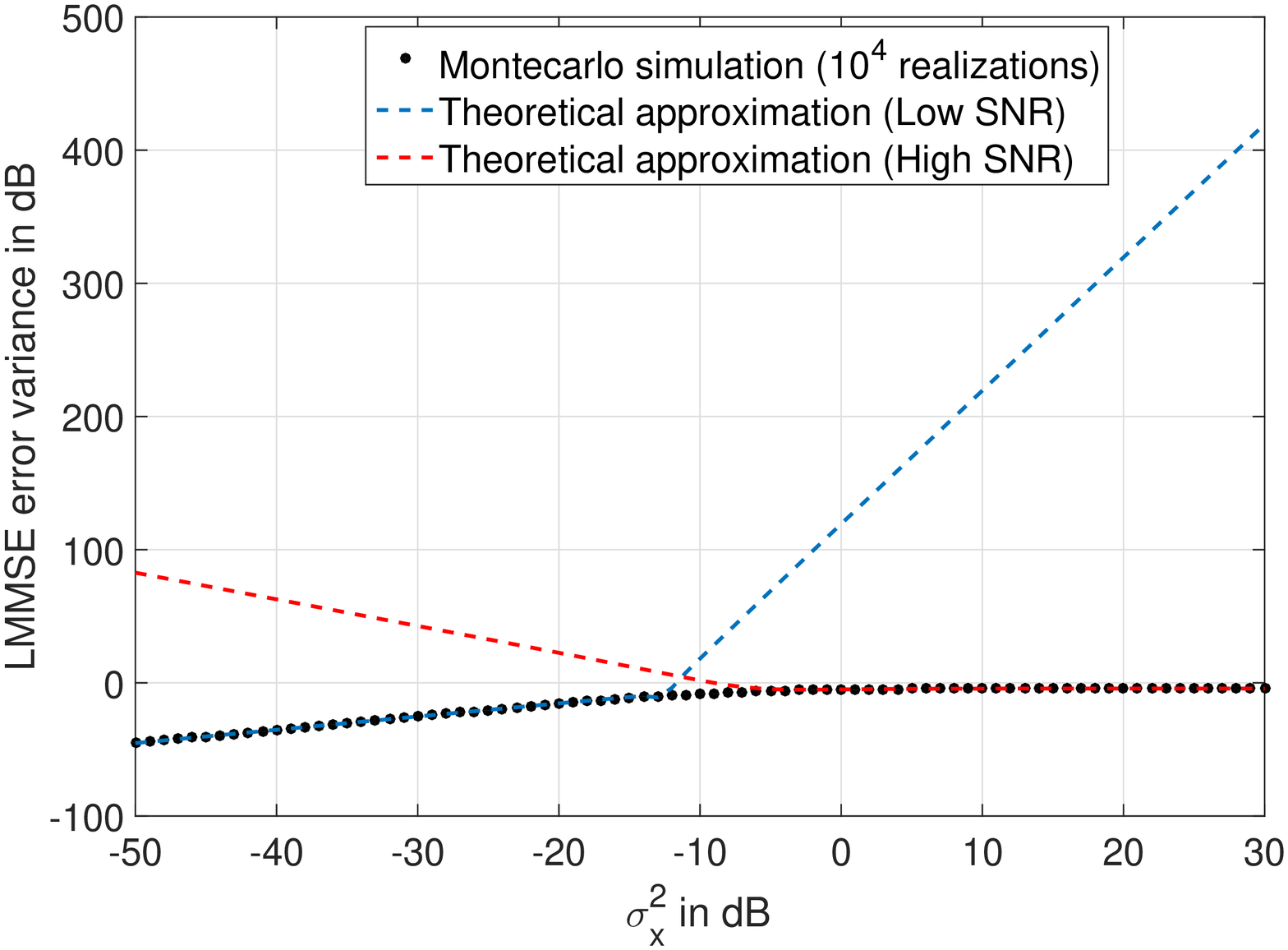}
\caption{LMMSE mean square error with $\mathbf{\Sigma}_z$ as in (\ref{scatter}): Montecarlo simulation versus theoretical approximation for the low and high SNR regimes.}
\label{fig7}
\end{figure} 
\begin{figure}[t!]
        \centering
    \includegraphics[width=3.5in]{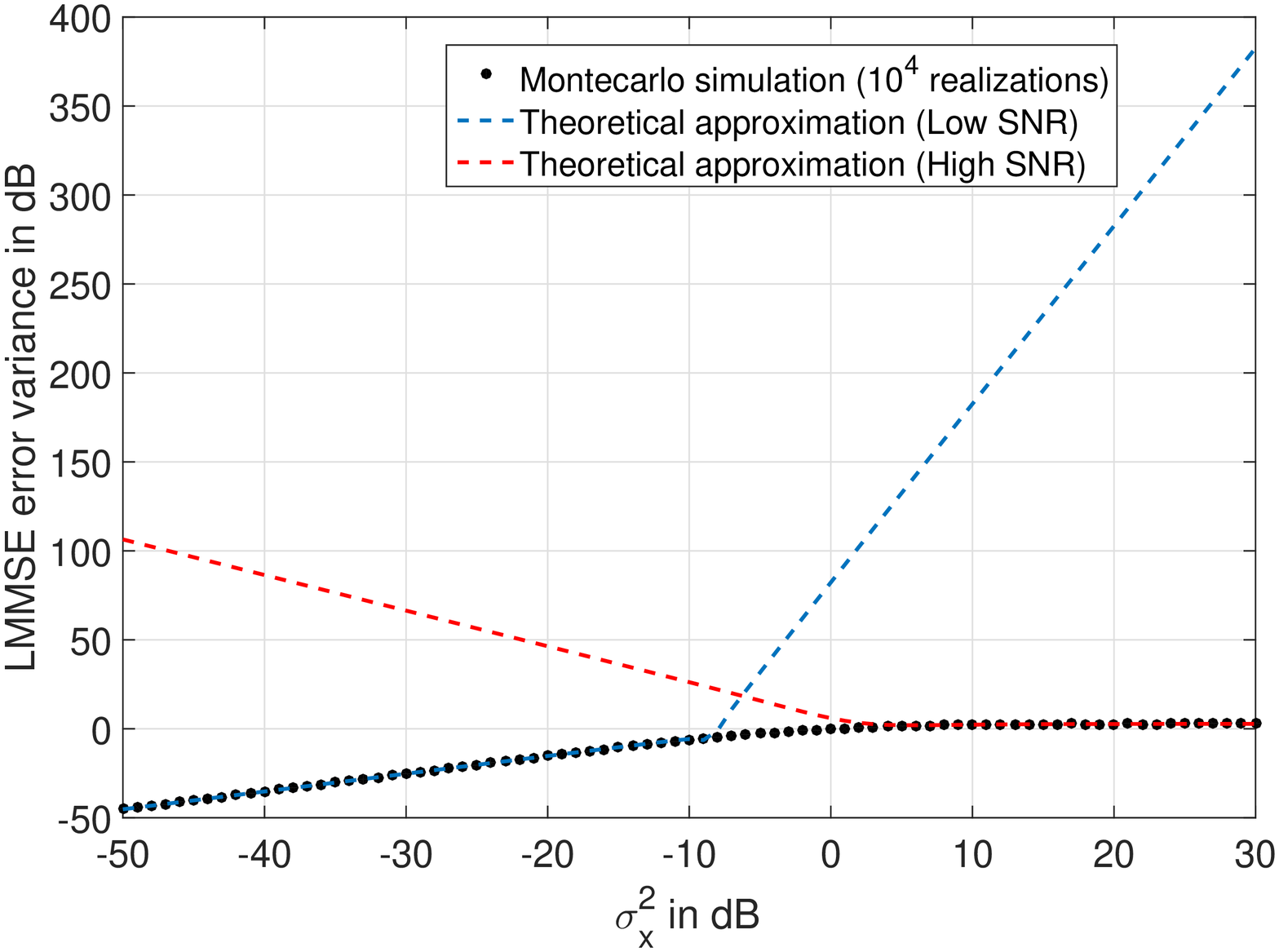}
\caption{LMMSE mean square error with $\mathbf{\Sigma}_z$ as in (\ref{lambdarandom}): Montecarlo simulation versus theoretical approximation for the low and high SNR regimes.}
\label{fig7}
\end{figure} 
To validate the approximations derived in Theroem 3, we set $m=3$ and $n=10$ and apply the obtained results for both correlation models in (\ref{scatter}) and (\ref{lambdarandom}).
For that in Figures 4 and 5, we compare the mean square error of the LMMSE using Montecarlo simulations with the approximations derived in Theorem 3. 
As shown in Figures 4 and 5, the approximation is quiet tight in both high and low SNR regimes and almost cover the whole SNR range.
\subsection{Sample Correlation Matrix (SCM)}
Estimation of covariance matrices is  of fundamental importance to several adpative processing applications. Assume that the measurements are arranged into an input vector $\mathbf{u} \in \mathbb{C}^{m \times 1}$ called also the observation vector. If the input process is zero-mean, its covariance matrix is given by:  
\begin{equation}
\mathbf{R}\triangleq \mathbb{E}\{\mathbf{u}\mathbf{u}^{*}\},
\end{equation}
where the expectation is taken over all realization of the input. The covariance matrix $\mathbf{R}$ is usually unknown, and thus has to be estimated. Assuming the input process to be ergodic, the covariance matrix can be estimated via time averaging. A well-known estimator is the sample correlation matrix (SCM) which is  given by \cite{MITthesis}
\begin{equation}
\hat{\mathbf{R}}\left(n\right)=\frac{1}{n}\sum_{k=1}^n \mathbf{u}\left(k\right)\mathbf{u}^*\left(k\right),
\end{equation}
This is called rectangularly windowed SCM, where $\mathbf{u}\left(k\right)$ is the input vector at discrete time $k$ and $n$ is the length of the observation window. When the observations are Gaussian distributed, the SCM  is the maximum likelihood (ML) estimator of the correlation matrix \cite{estimationbook}. Moreover, for a fixed and finite input size $m$, as the window size $n \rightarrow \infty$, the SCM converges to the input correlation matrix \cite{couillet}, in the sense that:
\begin{equation}
\left \| \mathbf{R}-\hat{\mathbf{R}}\left(n\right) \right \| \rightarrow 0, \quad a.s.
\end{equation} 
where $\left \|. \right \|$ is a spectral norm of a matrix.\\
However, the number of measurement is usually finite for practical applications. Thus, it is for a practical interest to evaluate the performance of the SCM when the window size is finite. In order to measure the accuracy of the estimator,  we define the average \textit{loss} as the average distance between the input correlation matrix and its estimated version using SCM for a given window size $n$ \cite{yang-94}: 
\begin{equation} \label{loss}
Loss\left(n\right)\triangleq \mathbb{E}\left \| \mathbf{R}^{\frac{1}{2}}\hat{\mathbf{R}}^{-1}\left(n\right)\mathbf{R}^{\frac{1}{2}}-\mathbf{I}_m\right \|^2_F,
\end{equation}
where $\mathbf{R}^{\frac{1}{2}}$ is a positive semi-definite square root of $\mathbf{R}$ and $\left \|. \right \|_F$ is the Forbenius norm of a matrix. \\
In order to emphasize some measurements relevant for the estimation of the correlation matrix, an exponentially weighted SCM can be used and it is given by \cite{MITthesis}:
\begin{equation} \label{weighted}
\hat{\mathbf{R}}\left(n\right)=(1-\lambda)\sum_{k=1}^n \lambda^{n-k}\mathbf{u}\left(k\right)\mathbf{u}^*\left(k\right),
\end{equation}
where $\lambda^{n-k}$ is the weight associated to the measurement vector at time instant $k$, the coefficient $\lambda$ being the forgetting factor. In the case where $\mathbf{u}\left(k\right)$ is modeled as a colored process $\mathbf{u}\left(k\right)=\mathbf{R}^{\frac{1}{2}}\mathbf{x}\left(k\right)$, where $\mathbf{x}\left(k\right) \in \mathbb{C}^{m \times 1}$ is a vector of $i.i.d$ Gaussian zero mean, unit variance entries, the SCM can be written in a matrix form as 
\begin{equation}
\hat{\mathbf{R}}\left(n\right)=\mathbf{R}^{\frac{1}{2}}\mathbf{X}\mathbf{\Lambda}\left(n\right)\mathbf{X}^{*}\mathbf{R}^{\frac{1}{2}},
\end{equation}
where $\mathbf{X}$ is $m \times n$ matrix whose $k$th column is $\mathbf{x}\left(k\right)$ and $\mathbf{\Lambda}\left(n\right)=\left(1-\lambda\right)\diag\left(\lambda^{n-1}, \lambda^{n-2}, \cdots, 1\right)$. 
In the following, we prove that we can derive a closed-form expression for the loss function defined in (\ref{loss}) . Let $\mathbf{S}_n=\mathbf{X}\mathbf{\Lambda}\left(n\right)\mathbf{X}^{*}$. Then, the loss can be expressed as
\begin{equation}
\label{loss_close}
\begin{split}
Loss\left(n\right)&=\mathbb{E}\left \|\mathbf{S}_n^{-1}-\mathbf{I}_m\right\|^2_F \\
&=\mathbb{E}\textsf{Tr} \left[\mathbf{S}_n^{-1}-\mathbf{I}_m\right]^* \left[\mathbf{S}_n^{-1}-\mathbf{I}_m\right] \\
&=\mathbb{E}\textsf{Tr}\Biggl[\mathbf{S}_n^{-2}-2\mathbf{S}_n^{-1}+\mathbf{I}_m\Biggr] \\
&=m+\mathbb{E}\textsf{Tr}\left[\mathbf{S}_n^{-2}\right]  -2\mathbb{E} \textsf{Tr}\left[\mathbf{S}_n^{-1}\right]\\
\normalsize
&=m+m\mu_{\mathbf{\Lambda}(n)}\left(-2\right)-2m\mu_{\mathbf{\Lambda}(n)}\left(-1\right) \\
& = m\left(1+\mu_{\mathbf{\Lambda}(n)}\left(-2\right)-2\mu_{\mathbf{\Lambda}(n)}\left(-1\right)\right).
\end{split}
\end{equation} 

One interesting problem is to find the optimal $\lambda \in \left(0,1\right)$ denoted by $\lambda^*$ that minimizes the loss function. This can be performed by   evaluating the loss function with respect to $\lambda \in \left(0,1\right)$ and then picking the $\lambda$ that gives the lowest loss. To evaluate the loss function, one can resort to MonteCarlo simulations. This would involve high complexity since they should be repeated for each value of $\lambda$. The use of the provided closed-form expression represent thus a valuable alternative being at the same time accurate and easier to implement. 
For $m=3$ and $n=10$, we plot the estimation loss as a function of $\lambda$ using the theoretical expression derived in (\ref{loss_close}). As shown in Figure l, for all cases a minimum exist and thus the performance can be optimized accordingly.
\begin{figure}[t!]
        \centering
    \includegraphics[width=3.5in]{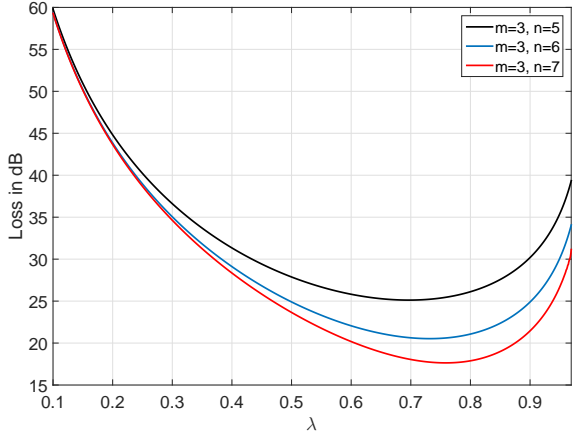}
\caption{The estimation loss as a function of $\lambda$: Theoretical expression in (\ref{loss_close}).}
\label{fig7}
\end{figure} 
\section{Conclusion} In this paper, we derived closed form expressions for the inverse order moments of general Gram matrices with one side correlation. Based on this formula, the exact average estimation error of the BLUE estimator has been derived and an accurate approximation for the LMMSE average estimation error was proposed in both high and low SNR regimes. Additionally, we have shown that our results can be used to evaluate the accuracy of  covariance matrix estimates.  
\section*{Appendix A}
\section*{Proof of Proposition 1}
As stated in section II, $\mathcal{M}_2\left(s\right)$ is given by:

\begin{align*}
&\mathcal{M}_2\left(s\right)=L\sum
_{j=r+1}^{m}\sum_{i=1}^{m}\mathcal{D}\left(i,j\right)\Gamma\left(s+j-1\right)\Biggl( \theta_{n-m+i}^{n-m+s+j-2} \\
& -\sum_{l=1}^{n-m}\sum_{k=1}^{n-m}\left[\mathbf{\Psi}^{-1}\right]_{k,l}\theta_{l}^{n-m+s+j-2}\theta_{n-m+i}^{k-1}\Biggr).
\end{align*}
The handling of $\mathcal{M}_2\left(s\right)$ does not pose any difficulty, the Gamma function being applied to  non-negative arguments. Interestingly, we can show that this term turns out to be equal to zero as $s$ goes to zero. 
To this end, notice that:
\small
\begin{align*}
&\lim _{s \rightarrow 0}\mathcal{M}_2\left(s-r+1\right)\\
&=L\sum
_{j=r+1}^{m}\sum_{i=1}^{m}\mathcal{D}\left(i,j\right)\Gamma\left(-r+j\right)\\& \times \Biggl( \theta_{n-m+i}^{n-m-r+j-1} -\sum_{l=1}^{n-m}\sum_{k=1}^{n-m}\left[\mathbf{\Psi}^{-1}\right]_{k,l}\theta_{l}^{n-m-r+j-1}\theta_{n-m+i}^{k-1}\Biggr) \\
&=L\sum_{j=r+1}^{m}\sum_{i=1}^m \left[\mathbf{\mathcal{D}}\right]_{i,j}\left[\mathbf{\mathcal{C}}\right]_{i,j-r} \\
&=L\sum_{j=r+1}^m \left[\mathbf{\mathcal{D}}^t\mathbf{\mathcal{C}}\right]_{j,j-r},
\end{align*}
\normalsize
where $\mathbf{\mathcal{D}}$ and $\mathbf{\mathcal{C}}$ are as defined in Lemma \ref{lemma:key}. 
Since $\mathbf{\mathcal{D}}$ is the cofactor of $\mathbf{\mathcal{C}}$, then $\mathbf{\mathcal{D}}^t\mathbf{\mathcal{C}}=\det\left(\mathbf{\mathcal{C}}\right) \mathbf{I}_{m}$. Therefore, $\left[\mathbf{\mathcal{D}}^t\mathbf{\mathcal{C}}\right]_{j,j-r}=0$ for $j=r+1, \cdots, m.$
\section*{Appendix B}
\section*{Proof of Proposition 2}
The handling of  $\mathcal{M}_1\left(s-r+1\right)$ is delicate because it involves evaluation of the Gamma function at negative integers. Hopefully, a compensation effect occurs due to the multiplicative term in front of the Gamma function. 
The proof relies on  a divide and conquer strategy that consists in decomposing $\mathcal{M}_1\left(s-r+1\right)$ into a sum of terms and then evaluating each term separately. To this end, we need to introduce the following notations.
\begin{equation}
\begin{split}
&\mathbf{\Psi}_s \triangleq \begin{bmatrix}
\theta_1^s &\theta_1^{1+s}  &\cdots   &\theta_1^{n-m+s-1} \\
\vdots  &\vdots   &\ddots   &\vdots  \\
 \theta_{n-m}^s&\theta_{n-m}^{1+s}  & \cdots  & \theta_{n-m}^{n-m+s-1}
\end{bmatrix} \\
& \mathbf{a}_{s,j}\triangleq\Biggl[\theta_1^{n-m+s-r+j-1}, \theta_2^{n-m+s-r+j-1}, \\&
 \cdots, \theta_{n-m}^{n-m+s-r+j-1} \Biggr]^t \\
&\mathbf{b}_{s,i}\triangleq\left[\theta_{n-m+i}^s,\theta_{n-m+i}^{1+s}, \cdots, \theta_{n-m+i}^{n-m+s-1}\right]^t \\
&\mathbf{b}_i\triangleq \left[1,\theta_{n-m+i}, \cdots, \theta_{n-m+i}^{n-m-1}\right]^t\\
&\mathbf{e}_{k}  \triangleq\left[0, \cdots,0,1, \mathbf{zeros}\left(k\right)\right]^t, \: k=0, \cdots, n-m-1.\\
\end{split}
\end{equation}
Using the previously defined varaibles, we can rewrite $\mathcal{M}_1\left(s-r+1\right)$ as in (\ref{M1}) (on top of the next page).
\begin{figure*}[!t]
 \begin{equation}
 \label{M1}
 \begin{split}
 \mathcal{M}_1\left(s-r+1\right)&=
L\sum
_{j=1}^{r}\sum_{i=1}^{m}\mathcal{D}\left(i,j\right)\Gamma\left(s-r+j\right) \Biggl( \theta_{n-m+i}^{n-m+s-r+j-1}  -\sum_{l=1}^{n-m}\sum_{k=1}^{n-m}\left[\mathbf{\Psi}^{-1}\right]_{k,l}\theta_{l}^{n-m+s-r+j-1}\theta_{n-m+i}^{k-1}\Biggr) \\
&=L\sum
_{j=1}^{r}\sum_{i=1}^{m}\mathcal{D}\left(i,j\right)\Gamma\left(s-r+j\right) \Biggl(\theta_{n-m+i}^{n-m+s-r+j-1}-\mathbf{b}_i^t \mathbf{\Psi}^{-1}\mathbf{a}_{s,j}\Biggr) \\ 
&=L\sum
_{j=1}^{r}\sum_{i=1}^{m}\mathcal{D}\left(i,j\right)\Gamma\left(s-r+j\right)\Biggl(\theta_{n-m+i}^{n-m+s-r+j-1}
-\mathbf{b}_{i}^t \mathbf{\Psi}_s^{-1}\mathbf{a}_{s,j}\Biggr)+L\sum
_{j=1}^{r}\sum_{i=1}^{m}\mathcal{D}\left(i,j\right)\Gamma\left(s-r+j\right) \\& \times
\mathbf{b}_i^t\left(\mathbf{\Psi}_s^{-1}-\mathbf{\Psi}^{-1}\right)\mathbf{a}_{s,j} \\
&=L\sum
_{j=1}^{r}\sum_{i=1}^{m}\mathcal{D}\left(i,j\right)\Gamma\left(s-r+j\right)\Biggl(\theta_{n-m+i}^{n-m+s-r+j-1} -\mathbf{b}_{s,i}^t \mathbf{\Psi}_s^{-1}\mathbf{a}_{s,j}\Biggr)+L\sum
_{j=1}^{r}\sum_{i=1}^{m}\mathcal{D}\left(i,j\right)\Gamma\left(s-r+j\right) \\
&\times \left(\mathbf{b}_{s,i}^t-\mathbf{b}_{i}^t\right)\mathbf{\Psi}_s^{-1}\mathbf{a}_{s,j}+L\sum
_{j=1}^{r}\sum_{i=1}^{m}\mathcal{D}\left(i,j\right)\Gamma\left(s-r+j\right) 
\mathbf{b}_i^t\left(\mathbf{\Psi}_s^{-1}-\mathbf{\Psi}^{-1}\right)\mathbf{a}_{s,j}
\end{split} 
 \end{equation}
\hrulefill
\vspace*{4pt}
\end{figure*}
 
 The first term in equation (\ref{M1}) is equal to zero. This can be seen by noticing that $\mathbf{\Psi}_s \mathbf{e}_{r-j}=\mathbf{a}_{s,j}$ and $\mathbf{b}_{s,i}^t\mathbf{e}_{r-j}=\theta_{n-m+i}^{n-m+s-r+j-1}$. Thus, $\mathbf{\Psi}_s^{-1}\mathbf{a}_{s,j}=\mathbf{e}_{r-j}$ and consequently $\mathbf{b}_{s,i}^t\mathbf{\Psi}_s^{-1}\mathbf{a}_{s,j}=\theta_{n-m+i}^{n-m+s-r+j-1} $. \\ It remains thus  to deal with the last two terms. Using a  Taylor approximation of $\mathbf{b}_{s,i}$ as $s$ approaching $0$, we have 
 \begin{equation}
 \begin{split}
 \mathbf{b}_{s,i}-\mathbf{b}_{i} 
 &=s\Biggl[\log \left(\theta_{n-m+i}\right),\theta_{n-m+i}\log \left(\theta_{n-m+i}\right),\\
 & \cdots,\theta_{n-m+i}^{n-m-1} \log \left(\theta_{n-m+i}\right) \Biggr]^t
 +o\left(s\right) \\
 &=s\log\left(\theta_{n-m+i}\right)\mathbf{b}_i+o\left(s\right).
 \end{split}
 \end{equation}
 To deal with the Gamma function evaluated at non positive integers, we rely on the result of the following lemma.
 \begin{lemma}{\cite{gamma}}
 For non positive arguments $-k$, $k=0, 1, 2, \cdots$, the Gamma function can be evaluated as 
  \begin{equation}
  \lim _{s \rightarrow 0} \frac{\Gamma\left(s-k\right)}{\Gamma\left(s\right)}=\frac{\left(-1\right)^k}{k!},
  \end{equation}
  where $\Gamma\left(s\right) = \frac{1}{s} +o(s)$ as $s$ approaches $0$.
 \end{lemma}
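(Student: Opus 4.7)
The plan is to invoke the functional equation $\Gamma(z+1)=z\Gamma(z)$ and iterate it to shift the argument of $\Gamma(s-k)$ upward until it lands near $z=1$, where the Gamma function is analytic with $\Gamma(1)=1$. This turns a statement about the poles of $\Gamma$ at nonpositive integers into a routine limit computation.

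First I would apply the recursion $k+1$ times to obtain
\[
\Gamma(s-k)=\frac{\Gamma(s+1)}{s\,(s-1)\,(s-2)\cdots(s-k)},
\]
which concentrates all of the singular behavior at $s=0$ in the single factor $s$ in the denominator, since each $(s-j)$ with $j=1,\dots,k$ and the numerator $\Gamma(s+1)$ are continuous at $s=0$. Multiplying through by $s$ yields
\[
s\,\Gamma(s-k)=\frac{\Gamma(s+1)}{\prod_{j=1}^{k}(s-j)}.
\]
Letting $s\to 0$, continuity gives $\Gamma(s+1)\to\Gamma(1)=1$ and $\prod_{j=1}^{k}(s-j)\to\prod_{j=1}^{k}(-j)=(-1)^{k}k!$, hence $\lim_{s\to 0}s\,\Gamma(s-k)=(-1)^{k}/k!$.

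Second, I would apply the same functional equation once to $\Gamma(s)$ itself, giving $s\,\Gamma(s)=\Gamma(s+1)\to 1$ as $s\to 0$, and then divide to conclude
\[
\lim_{s\to 0}\frac{\Gamma(s-k)}{\Gamma(s)}=\lim_{s\to 0}\frac{s\,\Gamma(s-k)}{s\,\Gamma(s)}=\frac{(-1)^{k}}{k!}.
\]
The trailing expansion $\Gamma(s)=1/s+o(s)$ drops out of the same identity $\Gamma(s)=\Gamma(s+1)/s$ together with the Taylor expansion of the analytic function $\Gamma(s+1)$ around $s=0$ (one can in fact sharpen this to $\Gamma(s)=1/s-\gamma+O(s)$, with $\gamma$ the Euler--Mascheroni constant, but only the leading $1/s$ pole is used in the downstream application within Appendix B).

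There is no substantive obstacle: the whole argument is a bookkeeping exercise built on the functional equation of the Gamma function, and the only spot requiring a moment of care is the sign count $(-1)(-2)\cdots(-k)=(-1)^{k}k!$ in the denominator, which upon inversion delivers the advertised $(-1)^{k}/k!$.
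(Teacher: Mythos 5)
Your argument is correct. Note, however, that the paper offers no proof of this lemma at all---it is simply quoted from the cited reference---so there is nothing to compare against; your derivation via iterating the functional equation $\Gamma(z+1)=z\Gamma(z)$ to write $\Gamma(s-k)=\Gamma(s+1)/\prod_{j=0}^{k}(s-j)$ is the standard one, and the sign count $\prod_{j=1}^{k}(-j)=(-1)^{k}k!$ is handled correctly. One small point in your favor: the trailing claim in the lemma as printed, $\Gamma(s)=\tfrac{1}{s}+o(s)$, is actually imprecise (the correction term is $-\gamma+O(s)$, i.e.\ $O(1)$, not $o(s)$); your sharper expansion $\Gamma(s)=\tfrac{1}{s}-\gamma+O(s)$ is the right statement, and as you observe only the leading pole is used downstream in Appendix B, so the imprecision is harmless there.
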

  Thus, $\Gamma\left(s-r+j\right) \underset{s\to 0}{=} \frac{\left(-1\right)^{r-j}}{s\left(r-j\right)!} +o(s)$. Therefore , as $s$ approaches 0, we have 
 \begin{align*}
 \Gamma\left(s-r+j\right) \left(\mathbf{b}_{s,i}^t-\mathbf{b}_{i}^t\right)\mathbf{\Psi}_s^{-1}\mathbf{a}_{s,j}\\ = \frac{\left(-1\right)^{r-j}\log\left(\theta_{n-m+i}\right)}{\left(r-j\right)!} \mathbf{b}_i^t\mathbf{\Psi}^{-1}\mathbf{a}_{j} +o(s),
 \end{align*}
 Finally, to deal with the last term, we use the following resolvent identity : 
 \begin{align*}
 \mathbf{\Psi}_s^{-1}-\mathbf{\Psi}^{-1}=\mathbf{\Psi}_s^{-1}\left(\mathbf{\Psi}-\mathbf{\Psi}_s\right)\mathbf{\Psi}^{-1}
\end{align*}  
We also make use of the fact  that as $s$ approaches $0$:
\begin{align*}
\left(\mathbf{\Psi}-\mathbf{\Psi}_s\right) \underset{s\to 0}{=}-s \tilde{\mathbf{\Psi}} +o(s)
\end{align*} 
where $$\tilde{\mathbf{\Psi}}=\boldsymbol{\Phi}\boldsymbol{\Psi}$$
with $\boldsymbol{\Phi}=\textsf{diag}\left(\log\left(\theta_1\right), \log\left(\theta_2\right),\cdots, \log\left(\theta_{n-m} \right)\right)$.
Thus, as $s$ approaches $0$, we have
\small
\begin{align*}
\Gamma\left(s-r+j\right) \mathbf{b}_i^t\left(\mathbf{\Psi}_s^{-1}-\mathbf{\Psi}^{-1}\right)\mathbf{a}_{s,j} \\
 \underset{s\to 0}{=} \frac{\left(-1\right)^{r+1-j}}{\left(r-j\right)!}\mathbf{b}_i^t\mathbf{\Psi}^{-1}\tilde{\mathbf{\Psi}}\mathbf{\Psi}^{-1}\mathbf{a}_j+o(s).
\end{align*}
\normalsize 
Finally, we have the following limit 
\begin{align*}
&\lim _{s \rightarrow 0} \mathcal{M}_1\left(s-r+1\right)\\&=
L\sum
_{j=1}^{r}\sum_{i=1}^{m}\mathcal{D}\left(i,j\right)\Biggl[\frac{\left(-1\right)^{r-j}\log\left(\theta_{n-m+i}\right)}{\left(r-j\right)!} \mathbf{b}_i^t\mathbf{\Psi}^{-1}\mathbf{a}_{j} \\
+&\frac{\left(-1\right)^{r+1-j}}{\left(r-j\right)!}\mathbf{b}_i^t\mathbf{\Psi}^{-1}\tilde{\mathbf{\Psi}}\mathbf{\Psi}^{-1}\mathbf{a}_j\Biggr]. \\
\end{align*}
This expression can be further simplified by noticing that $\tilde{\mathbf{\Psi}}\mathbf{\Psi}^{-1}=\boldsymbol{\Phi}$. Finally, we have 
\begin{align*}
&\lim _{s \rightarrow 0} \mathcal{M}_1\left(s-r+1\right)\\
&=L\sum
_{j=1}^{r}\sum_{i=1}^{m}\mathcal{D}\left(i,j\right)\frac{\left(-1\right)^{r-j}}{\left(r-j\right)!}\mathbf{b}_i^t\mathbf{\Psi}^{-1}\Biggl[\log\left(\theta_{n-m+i}\right) \mathbf{I}_{n-m} \\
&-\boldsymbol{\Phi} \Biggr]\mathbf{a}_j \\
& = L\sum
_{j=1}^{r}\sum_{i=1}^{m}\mathcal{D}\left(i,j\right)\frac{\left(-1\right)^{r-j}}{\left(r-j\right)!}\mathbf{b}_i^t\mathbf{\Psi}^{-1}\mathbf{D}_i\mathbf{a}_j,
\end{align*}
thereby ending up the proof of the proposition. 
\section*{Appendix C}
\section*{Proof of Theorem 2}
We start the proof by noticing that $m\left(z\right)$ satisfies:
\begin{align*}
-z\underline{m}\left(z\right)+\frac{n}{m}-\frac{1}{m}\sum_{k=1}^n\frac{1}{1+\left[\mathbf{D}\right]_{k,k}\underline{m}\left(z\right)}=1.
\end{align*}
Let $f_k\left(z\right)=-\frac{1}{1+\left[\mathbf{D}\right]_{k,k}\underline{m}\left(z\right)}$. Then, the above equation becomes:
\begin{align*}
-z\underline{m}\left(z\right)+\frac{n}{m}+\frac{1}{m}\sum_{k=1}^nf_k\left(z\right)=1.
\end{align*}
Taking the $p-1$ derivative of the above equation and set $z=0$, we have:
\begin{align}
\label{recursivem}
p\underline{m}^{\left(p-1\right)}\left(0\right)=\frac{1}{m}\sum_{k=1}^nf^{\left(p\right)}_k\left(0\right).
\end{align}
On the other hand, functions $f_k\left(z\right)$ satisfy:
\begin{align} \label{appCeq4}
-f_k\left(z\right)-\left[\mathbf{D}\right]_{k,k}f_k\left(z\right)\underline{m}\left(z\right)=1.
\end{align}
Taking the $p$-th derivative of equation (\ref{appCeq4}), we get:
\begin{align*}
-f^{\left(p\right)}_k-\left[\mathbf{D}\right]_{k,k}\sum_{l=0}^p  \binom {p} {l}f^{\left(p-l\right)}_k\underline{m}^{\left(l\right)}.
\end{align*}
or equivalently,
\begin{equation}
\label{recursivefk}
f^{\left(p\right)}_k+\sum_{l=1}^{p}\binom {p} {l}\frac{\left[\mathbf{D}\right]_{k,k}\underline{m}^{\left(l\right)}f^{\left(p-l\right)}_k}{1+\left[\mathbf{D}\right]_{k,k}\underline{m}\left(0\right)}=0
\end{equation}
Hence, by separating the first term of the sum in (\ref{recursivefk}), we obtain
\begin{align*}
f^{\left(p\right)}_k+\frac{\left[\mathbf{D}\right]_{k,k}\underline{m}^{\left(p\right)}f^{\left(0\right)}_k}{1+\left[\mathbf{D}\right]_{k,k}\underline{m}\left(0\right)}+\sum_{l=1}^{p-1}\binom {p} {l}\frac{\left[\mathbf{D}\right]_{k,k}\underline{m}^{\left(l\right)}f^{\left(p-l\right)}_k}{1+\left[\mathbf{D}\right]_{k,k}\underline{m}\left(0\right)}=0.
\end{align*}
Combining (\ref{recursivem}) and (\ref{recursivefk}) and taking the sum over $k$ of the above equation, we get:
\begin{equation}
\label{ierative}
\begin{split}
 p\underline{m}^{\left(p-1\right)} &+\frac{\underline{m}^{\left(p\right)}}{m}\sum_
{k=1}^n \frac{\left[\mathbf{D}\right]_{k,k}f^{\left(0\right)}_k}{1+\left[\mathbf{D}\right]_{k,k}\underline{m}\left(0\right)} \\
&+\frac{1}{m}\sum_{k=1}^n\sum_{l=1}^{p-1}\binom {p} {l}\frac{\left[\mathbf{D}\right]_{k,k}\underline{m}^{\left(l\right)}f^{\left(p-l\right)}_k}{1+\left[\mathbf{D}\right]_{k,k}\underline{m}\left(0\right)}=0,
\end{split}
\end{equation}
where $\underline{m}\left(0\right)$ is the unique solution to the fixed point equation:
\begin{equation}
\label{m0}
\underline{m}\left(0\right)=\frac{1}{\frac{1}{m}\sum_{k=1}^n\frac{\left[\mathbf{D}\right]_{k,k}}{1+\left[\mathbf{D}\right]_{k,k}\underline{m}\left(0\right)}}.
\end{equation}
This ends up the proof of the Theorem.
\section*{Appendix D}
\section*{Proof of Theorem 4}
The proof of this theorem is based on a Taylor approximation of the LMMSE average error. 
\begin{enumerate}
\item High SNR regime ($\sigma_x^2 \gg 1 $): \\
 As stated in equation (\ref{covarianceMMSE})
 \begin{align*}
 \mathbf{\Sigma}_{e,lmmse}=\left(\frac{1}{\sigma_x^2}\mathbf{I}_n+\mathbf{H}^*\mathbf{\Sigma}_z^{-1}\mathbf{H}\right)^{-1}.
\end{align*} 
By setting $\mathbf{\Sigma}_z^{-1}=\mathbf{\Lambda}$, we have 
\begin{align*}
\mathbf{\Sigma}_{e,lmmse}=\left(\frac{1}{\sigma_x^2}\mathbf{I}_n+\mathbf{S}\right)^{-1}.
\end{align*}
Since $\frac{1}{\sigma_x^2} \ll 1$, then by Tayor expansion around $\mathbf{0}$, the trace of $\mathbf{\Sigma}_{e,lmmse}$ can be expressed as:
\begin{align}
\textsf{Tr}\left(\mathbf{\Sigma}_{e,lmmse}\right)&=\sum_{k=0}^{\infty}\frac{\left(-1\right)^k}{\sigma_x^{2k}}\textsf{Tr}\left(\mathbf{S}^{-k-1}\right),\\
&=\sum_{k=0}^{l}\frac{\left(-1\right)^k}{\sigma_x^{2k}}\textsf{Tr}\left(\mathbf{S}^{-k-1}\right) +o\left(\frac{1}{\sigma_x^{2l}}\right)
\end{align}
where $l\leq p-1$ is the order at which we truncate the Taylor expansion. Note that we impose the condition $r\leq p-1$ since the moments $\mu_{\mathbf{\Lambda}}\left(-k-1\right)$ are only defined for $1 \leq k \leq p-1$.
 Upon applying the expectation, we get:
\begin{equation}
\begin{split}
\mathbb{E}_{\mathbf{H}}\{ \Vert\hat{\mathbf{x}}_{lmmse}-\mathbf{x}\Vert^2\}&=\sum_{k=0}^{l} \frac{\left(-1\right)^k}{\sigma_x^{2k}} \mu_{\mathbf{\Lambda}}\left(-k-1\right)\\ &+o\left(\sigma_x^{-2l}\right)
\end{split}
\end{equation}

 \item Low SNR regime ($\sigma_x^2 \ll 1$): \\
We proceed similarly as the high SNR regime. For that, we make some manipulations on $\mathbf{\Sigma}_{e,lmmse}$ as follows:
\begin{align*}
\mathbf{\Sigma}_{e,lmmse} &=\left(\frac{1}{\sigma_x^2}\mathbf{I}_n+\mathbf{S}\right)^{-1} \\
&=\sigma_x^2 \left(\mathbf{I}_n+\sigma_x^2 \mathbf{S}\right)^{-1} \\
&=\sigma_x^2 \left(\sigma_x^2\mathbf{I}_n+\mathbf{S}^{-1}\right)^{-1}\mathbf{S}^{-1}
\end{align*}
Using the same approach for proving Theorem \ref{approximation}. 1), we get 
\begin{equation}
\mathbb{E}_{\mathbf{H}}\{ \Vert\hat{\mathbf{x}}_{lmmse}-\mathbf{x}\Vert^2\}=m\sum_{k=0}^{\infty} \left(-1\right)^k\sigma_x^{2k+2} \mu_{\mathbf{\Lambda}}\left(k\right)
\end{equation}
As seen in the previous equation, we retain all the positive moments since they exist and can be computed by means of Theorem 1. \\
This completes the proof of Theorem \ref{approximation}
\end{enumerate}
\bibliographystyle{IEEEtran}
\bibliography{References}
\end{document}